\newcolumntype{L}[1]{>{\raggedright\arraybackslash}p{#1}}
\newcolumntype{C}[1]{>{\centering\arraybackslash}p{#1}}
\newcolumntype{R}[1]{>{\raggedleft\arraybackslash}p{#1}}
\theoremstyle{plain} 
\newtheorem{proposition}{Proposition}
\newtheorem{definition}{Definition}
\newtheorem{theorem}{Theorem}
\newtheorem{assumption}{Assumption}
\def\defn{\,\coloneqq\,}
\def\argmin{\mathop{\mathsf{arg\,min}}} 
\def\lim{\mathop{\mathsf{lim}}} 
\def\min{\mathop{\mathsf{min}}} 
\def\max{\mathop{\mathsf{max}}}
\def\zer{\mathsf{zer}}
\def\DnCNNast{{\text{DnCNN}^\ast}}
\def\uin{u_\text{in}}
\def\usc{u_\text{sc}}
\def\del{\, \mathrm{d}} 
\def\ebm{{\bm{e}}}
\def\hbm{{\bm{h}}}
\def\xbm{{\bm{x}}}
\def\ybm{{\bm{y}}}
\def\zerobm{\bm{0}}
\def\Abm{{\bm{A}}}
\def\Hbm{{\bm{H}}}
\def\Fbm{{\bm{F}}}
\def\Ibm{{\bm{I}}}
\def\ybf{{\mathbf{y}}}
\def\rbmp{{\bm{r}^\prime}}
\def\xbmast{{\bm{x}^\ast}}
\def\xbmhat{{\widehat{\bm{x}}}}
\def\Gsfhat{{\widehat{\Gsf}}}
\def\nablahat{{\widehat{\nabla}}}
\def\rbmp{{\bm{r}^\prime}}
\def\rbfp{{\mathbf{r}^\prime}}
\def\rbf{\mathbf{r}}
\def\ybf{\mathbf{y}}
\def\ebf{\mathbf{e}}
\def\Ibf{\mathbf{I}}
\def\Tsf{{\mathsf{T}}}
\def\Dsf{{\mathsf{D}}}
\def\Hsf{{\mathsf{H}}}
\def\Nsf{{\mathsf{N}}}
\def\Gsf{{\mathsf{G}}}
\def\Isf{{\mathsf{I}}}
\def\Psf{{\mathsf{P}}}
\def\Rsf{{\mathsf{R}}}
\def\Hsf{{\mathsf{H}}}
\def\C{\mathbb{C}}
\def\R{\mathbb{R}}
\def\E{\mathbb{E}}
\def\Lcal{{\mathcal{L}}}
\def\De{{\Delta \epsilon}}
\def\DeRe{{\Delta \epsilon_\mathrm{Re}}}
\def\DeIm{{\Delta \epsilon_\mathrm{Im}}}
\def\tDeRe{{\widetilde{\Delta \epsilon}_\mathrm{Re}}}
\def\tDeIm{{\widetilde{\Delta \epsilon}_\mathrm{Im}}}
\def\HRe{{H_{\mathrm{Re}}}}
\def\HIm{{H_{\mathrm{Im}}}}
\def\De{{\Delta \epsilon}}
\def\DeRe{{\Delta \epsilon_\text{Re}}}
\def\DeIm{{\Delta \epsilon_\text{Im}}}
\def\tDeRe{{\widetilde{\Delta \epsilon}_\text{Re}}}
\def\tDeIm{{\widetilde{\Delta \epsilon}_\text{Im}}}
\def\HRe{{H_{\text{Re}}}}
\def\HIm{{H_{\text{Im}}}}
\newcommand{\pgftextcircled}[1]{
    \setbox0=\hbox{#1}%
    \dimen0\wd0%
    \divide\dimen0 by 2%
    \begin{tikzpicture}[baseline=(a.base)]%
        \useasboundingbox (-\the\dimen0,0pt) rectangle (\the\dimen0,1pt);
        \node[circle,draw,outer sep=0pt,inner sep=0.1ex] (a) {#1};
    \end{tikzpicture}
}
\let\textcircled=\pgftextcircled
\def\ctwo{\textcircled{2}}
\def\cfive{\textcircled{5}}
\def\csix{\textcircled{6}}
\def\proposed{SIMBA}
\definecolor{pink}{HTML}{000000}
\definecolor{lightgreen}{RGB}{229,251,229}
\definecolor{red}{rgb}{0,0,0}
\title{SIMBA: Scalable Inversion in Optical Tomography\\using Deep Denoising Priors}
\author{Zihui~Wu%
\thanks{Department of Computer Science \& Engineering, Washington University in St.~Louis, St.~Louis, MO 63130.}
\hspace{0.05em},
Yu~Sun$^{\ast}$,
Alex~Matlock%
\thanks{Department of Electrical and Computer Engineering, Boston University, Boston, MA 02215, USA.}
\hspace{0.05em},
Jiaming~Liu%
\thanks{Department of Electrical \& Systems Engineering, Washington University in St.~Louis, St.~Louis, MO 63130.}
\hspace{0.05em},\\
Lei~Tian$^{\dagger}$%
, and Ulugbek~S.~Kamilov$^{\ast, \ddagger}$}
\begin{document}
\date{}
\maketitle

\begin{abstract}
Two features desired in a three-dimensional (3D) optical tomographic image reconstruction algorithm are the ability to reduce imaging artifacts and to do fast processing of large data volumes. Traditional iterative inversion algorithms are impractical in this context due to their heavy computational and memory requirements. We propose and experimentally validate a novel \emph{scalable iterative minibatch algorithm (SIMBA)} for fast and high-quality optical tomographic imaging. SIMBA enables high-quality imaging by combining two complementary information sources: the physics of the imaging system characterized by its forward model and the imaging prior characterized by a denoising deep neural net. SIMBA easily scales to very large 3D tomographic datasets by processing only a small subset of measurements at each iteration. We establish the theoretical fixed-point convergence of \proposed~under nonexpansive denoisers for convex data-fidelity terms. We validate \proposed~on both simulated and experimentally collected intensity diffraction tomography (IDT) datasets. Our results show that SIMBA can significantly reduce the computational burden of 3D image formation without sacrificing the imaging quality.
\end{abstract}

\section{Introduction}

Optical tomographic imaging seeks to recover the three-dimensional (3D) distribution of the refractive index of an object from its light measurements. 
In a standard setup (see Figure~\ref{fig:Schema} for an example), the sample is illuminated multiple times from different angles and the scattered light-field is recorded with a camera. In the interferometry-based microscopy, one measures both the amplitude and the phase of the scattered field~\cite{Choi.etal2007,Kamilov.etal2015,Kim.etal2014}, while in the intensity-only setups one measures only the amplitude of the light-field~\cite{Gbur.etal02, Tian.etal14, Tian.Waller2015}. A tomographic reconstruction algorithm is then used to computationally reconstruct the 3D distribution of the sample's refractive index. The quantitative characterization of the refractive index is important in biomedical imaging since it allows to visualize the internal structure of a tissue, as well as characterize physical changes within biological samples.

\begin{figure}[t]
\centering\includegraphics[width=8.5cm]{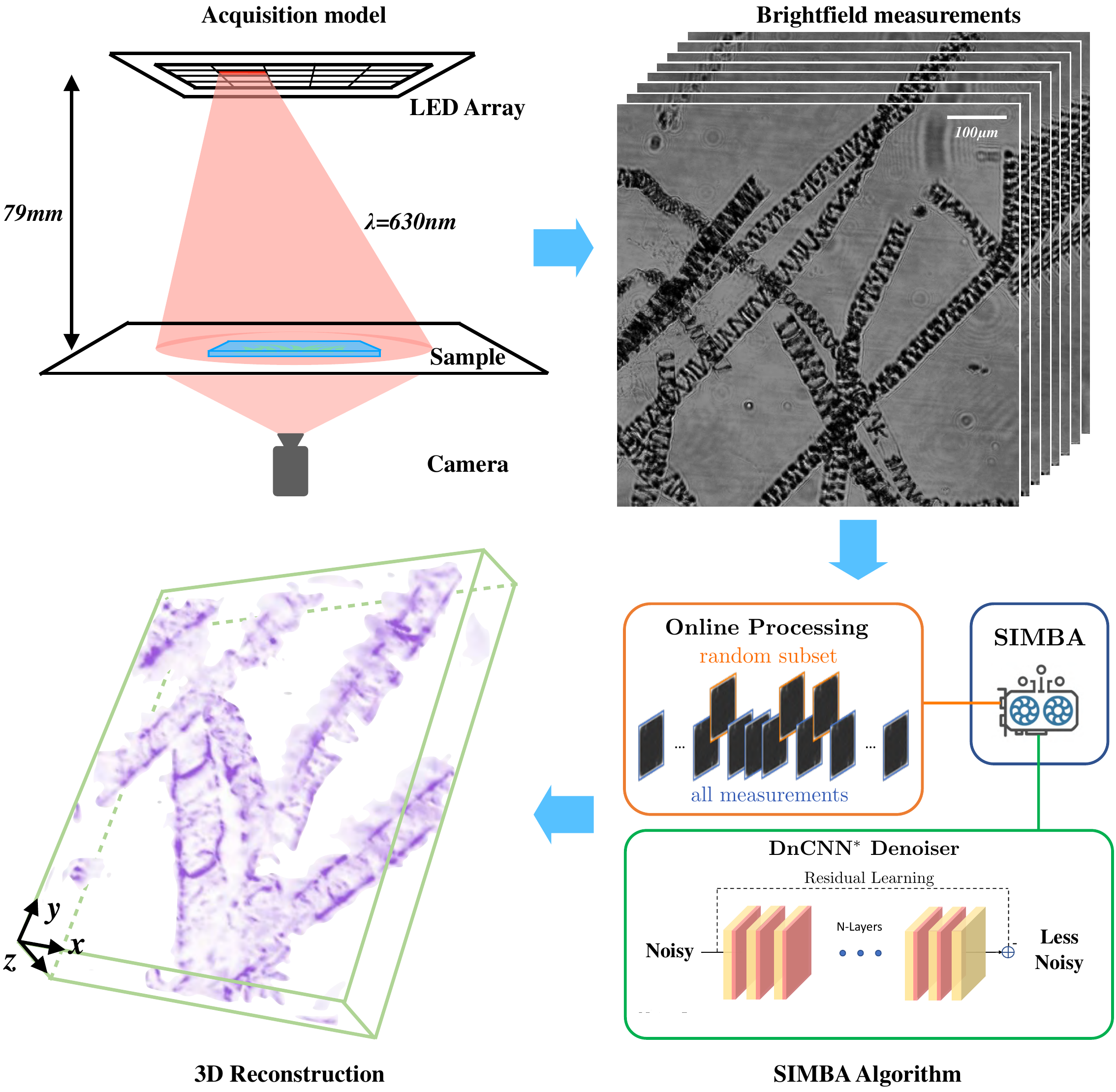}
\caption{The conceptual illustration of the proposed inversion algorithm for optical tomographic imaging. The brightfield measurements of the scattered light-field are collected with a standard computational microscope platform. An \emph{online} reconstruction algorithm, SIMBA, facilitated by a \emph{convolutional neural network (CNN)} denoiser is then used to form a 3D phase image. Unlike traditional batch algorithms, SIMBA processes a subset of measurements at a time, making it scalable for processing very large tomographic datasets.}
\label{fig:Schema}
\end{figure}

The reconstruction of the refractive index is often formulated as an inverse problem. In this context, the \emph{forward model} characterizes the physics of data-acquisition and can be used to ensure the consistency of the final estimate with respect to the measurements. However, the need for processing large-scale tomographic data limits the utility of traditional iterative methods in 3D optical tomography. Traditional \emph{batch} algorithms process the whole tomographic dataset at every iteration. On the other hand, \emph{online} algorithms can effectively scale to large datasets by processing only a small subset of data per iteration.



Using imaging priors is a standard strategy for mitigating the ill-posed nature of many tomographic imaging problems. Popular imaging priors include Tikhonov~\cite{Ribes.Schmitt2008} and total variation (TV)~\cite{Kamilov.etal2016} regularizers. Recently, a new class of methods, called \emph{plug-and-play priors (PnP)} \cite{Venkatakrishnan.etal2013}, have popularized the idea of using general image denoisers as imaging priors within iterative inversion. By leveraging advanced image denoisers, such as BM3D~\cite{Danielyan2013} and DnCNN~\cite{Zhang.etal2017}, PnP methods have achieved the state-of-the-art performance in various imaging applications~\cite{Sreehari.etal2016, Chan.etal2016, Teodoro.etal2019, Brifman.etal2016, Teodoro.etal2016, Zhang.etal2017a, Meinhardt.etal2017, Kamilov.etal2017, Sun.etal2019a, Sun.etal2019b}. An alternative framework for using image denoisers is the \emph{regularization by denoising (RED)}~\cite{Romano.etal2017}, where the denoiser is used to formulate an explicit regularizer that has a simple gradient. The work~\cite{Reehorst.Schniter2019} has clarified the existence of RED regularizers for certain class of denoisers, and the excellent performance of the framework has been demonstrated in phase retrieval~\cite{Metzler.etal2018} and image super-resolution~\cite{Mataev.etal2019} using DnCNN and the deep image prior, respectively. In short, using advanced denoisers has proven to be effective for improving the reconstruction quality in various imaging contexts. Note that the concept of regularization described in this paper is distinct from the concept of regularization for training deep neural nets by injecting noise~\cite{Seghouane.etal2004}.

In this paper, we present a new \emph{scalable iterative minibatch algorithm (SIMBA)} for the regularized inversion in optical tomography. \proposed~is an online extension of the traditional RED framework. It can thus leverage powerful convolutional neural network (CNN) denoisers as imaging priors, while also taking advantage of the physical information available through the forward model. However, unlike traditional RED algorithms, SIMBA is scalable to datasets that are too large for batch processing since it only uses a subset of measurements at a time. We prove that \proposed~converges in expectation to the same set of fixed points as its batch counterparts under a set of transparent assumptions. Thus, SIMBA benefits from the excellent imaging quality offered by RED, but does so in a computationally tractable way for optical tomographic imaging.

We validate \proposed~in the context of intensity-only microscopy called \emph{intensity diffraction tomography (IDT)}. IDT microscopes are relatively cheap and easy to implement since they do not collect the phase of the light. We adopt the IDT forward model in~\cite{Ling.etal18} that establishes a linear relationship between the desired object and the intensity measurements by neglecting the terms corresponding to higher order light scattering. We show that \proposed~can efficiently reconstruct a high-resolution ($1024\times1024\times25$ pixels) IDT image while also offering improvements in the 3D sectioning capability. The preliminary version of this work was presented in~\cite{Wu.etal2019}. The current paper significantly extends~\cite{Wu.etal2019} by including the IDT model, providing additional simulations, and validating the method on an experimentally collected 3D IDT dataset.

This paper is organized as follows. In Section~\ref{Sec:Background}, we introduce the IDT forward model and the RED framework. In Section~\ref{Sec:Proposed}, we present the algorithmic details of \proposed. In Section~\ref{Sec:Convergence}, we analyze the fixed-point convergence under a set of assumptions. In Section~\ref{Sec:Experiments}, we provide simulations and experiments that illustrates the efficiency and effectiveness of \proposed. Section~\ref{Sec:Conclusion} concludes the paper.

\section{Background}
\label{Sec:Background}
In this section, we provide the background on IDT and image-denoising priors. We start by describing the IDT forward model, then formulate the corresponding inverse problem, and finally introduce the RED framework as a strategy to leverage image denoisers as priors.

\subsection{Linearized IDT}

Consider a 3D object with the permittivity distribution $\epsilon(\rbf)$ in a bounded sample domain $\Omega \subset \R^3$, immersed into the background medium of permittivity $\epsilon_b$. We use $\De = \DeRe + i\DeIm = \epsilon - \epsilon_b$ to denote the permittivity contrast between the object and the background medium. The real part $\DeRe$ corresponds to the phase effect, and the imaginary part $\DeIm$ accounts for the absorption. The object is illuminated by an angled incident light field $\uin(\rbf)$. The incident field $\uin$ is assumed to be known inside $\Omega$ as well as at the camera domain $\Gamma \subset \R^2$. The total light-field $u(\rbf)$ is measured only through its intensity at the camera. Here, $\rbf = (x,y,z)$ denotes the 3D spatial coordinates. Under the first Born approximation~\cite{Wolf1969}, the light-sample interaction is described by the following equation
\begin{equation}
\label{Eq:ImageField}
u(\rbf) = \uin(\rbf) + \int_{\Omega} g(\rbf - \rbmp) \, v(\rbfp) \, \uin(\rbfp) \del \rbfp,\quad\rbf \in \Omega
\end{equation}
where $u(\rbf) = \uin(\rbf) + \usc(\rbf)$ is the total light field, ${v(\rbf) = \frac{1}{4 \pi} k^2 \De}$ is the scattering potential, $k=2\pi/\lambda$ is wave number in free space, and $\lambda$ is the wavelength of the illumination. In the 3D space, the Green's function at the camera plane $\Gamma$ is given by
$$g(\rbf) = \frac{\ebf^{ik_b\|\rbf\|_2}}{\|\rbf\|_2},$$
where $k_b = \sqrt{\epsilon_b}k$ is the wavenumber of the background medium, and $\|\cdot\|_{2}$ denotes the $\ell_2$-norm. For a single illumination, the intensity of the light field after propagating through the sample is given by
\begin{equation}
\label{Eq:NonlinearModel}
\Ibf= |u(\rbf) \ast p|^2,
\end{equation}
where $p$ is the point spread function of the microscope, and the operator $\ast$ denotes the 2D convolution. Eq.~\eqref{Eq:NonlinearModel} can be expanded into the summation of four components
\begin{equation}
\label{Eq:Expansion}
\Ibf= \Ibf^{ii}+\Ibf^{ss}+\Ibf^{is}+\Ibf^{si},
\end{equation}
where $\Ibf^{ii}$ is the constant background intensity, $\Ibf^{ss}$ is the squared modulus of the scattered field, and $\Ibf^{is}=(\Ibf^{si})^{\ast}$ are the cross terms that relate the unscattered and scattered field. Here, $(\cdot)^{\ast}$ denotes the complex conjugate. Due to the first Born approximation, $\Ibf^{ss}$ can be assumed to be small and thus neglected. By modeling the 3D object as a series of slices along the axial dimension $z$, one can represent the spectrum of the total scattered field as the summation of the sub-scattered fields produced by each slice~\cite{Ling.etal18}
\begin{equation}
\label{Eq:LinearizedModel}
\widetilde{\Ibf} = \widetilde{\Ibf}^{ii} \; + \int\left[\HRe(z) \tDeRe(z)+\HIm(z) \tDeIm(z)\right] \del z
\end{equation}
where $\widetilde{\,\cdot\,}$ denotes 2D Fourier transform, and $\widetilde{\Ibf}^{ii}$ is the background intensity spectrum measured at $\Gamma$. In (\ref{Eq:LinearizedModel})$, \HRe$ and $\HIm$ are the angle-dependent phase and absorption transfer functions (TF) for each sample slice at depth $z$, respectively. These TFs linearly map the Fourier transform of the permittivity contrast to the intensity spectrum of the scattered field. We refer the reader to~\cite{Ling.etal18} for the full details of the TF for IDT.

By discretizing \eqref{Eq:LinearizedModel} and explicitly including the Fourier transform into the equation, we obtain the following linear model in the spatial domain for the $i^{\text{th}}$ illumination
\begin{equation}
\label{Eq:discretizedSingle}
\Ibm_{i} = \Ibm^{ii}_i + \Re\Bigg\{\sum_{j=0}^{J}\Abm_{ij}\xbm_{j}\Bigg\},\;\,\,\text{with}\;\,\,\Abm_{ij}=\Fbm^{\textsf{\tiny H}}\Hbm_{ij}\Fbm,
\end{equation}
where $j=0,\dots, J$ discretely indexes the axial direction $z$, $\xbm_j \in \C^N$ is the discretized complex permittivity contrast of the $j^{\text{th}}$ slice, $\Ibm_i$ is the measured intensity of the total field, $\Ibm^{ii}_i$ is the discretized intensity of the background, and $\Hbm_{ij}$ is the discretized TF accounting for both phase and absorption at $z_j$. We use $\Fbm$ and $\Fbm^{\textsf{\tiny H}}$ to denote the 2D discrete Fourier transform and its inverse, respectively. By re-arranging the terms, we can obtain the following linear forward model
\begin{equation}
\label{Eq:discretizedModel}
\ybm_i = \Abm_i\xbm + \ebm, \quad\text{with}\quad\Abm_i^\textsf{\tiny H} = \begin{bmatrix}\Abm_{i0}^{\textsf{\tiny H}}\\ \vdots \\ \Abm_{iJ}^\textsf{\tiny H}\end{bmatrix},\; \xbm = \begin{bmatrix}\xbm_0\\ \vdots \\ \xbm_J\end{bmatrix}
\end{equation}
where the operator $(\cdot)^{\Hsf}$ denotes the conjugate transpose, ${\ybm_i \defn \Ibm_i -\Ibm_i^{ii} \in \R^N}$ is the measured intensity with the removal of the background intensity for the $i^{\text{th}}$ illumination, and $\ebm\in\R^N$ is the error term. Note that, as was discussed in~\cite{Ling.etal18}, the IDT forward model does not contain any information on the DC component of the phase.

\subsection{Inverse Problem}

Since image reconstruction in optical tomography is often ill-posed, it is typically formulated as the regularized inversion problem
\begin{equation}
\label{Eq:Optimization}
\xbmhat = \argmin_{\xbm \in \C^N} \left\{g(\xbm)+h(\xbm)\right\},
\end{equation}
where $g$ is the data-fidelity term that ensures the consistency with the measured data, and $h$ is the regularization term that imposes the prior knowledge on the desired image. For example, the Tikhonov regularization~\cite{Tikhonov.Arsening1977} assumes a Gaussian prior on the unknown image. It has been previously used in IDT for deriving a closed form solution~\cite{Ling.etal18}. More recent regularizers, such as the sparsity-promoting $\ell_1$-norm penalty~\cite{Figueiredo.Nowak2001} and the edge-preserving total variation (TV)~\cite{Rudin.etal1992}, are nonsmooth and do not have closed-form solutions, thus requiring iterative algorithms for image formation. In particular, the family of proximal methods---such as proximal gradient method (PGM)~\cite{Figueiredo.Nowak2003, Daubechies.etal2004, Bect.etal2004, Beck.Teboulle2009a} and alternating direction method of multipliers (ADMM)~\cite{Eckstein.Bertsekas1992, Afonso.etal2010, Ng.etal2010, Boyd.etal2011}---avoid the need to differentiate the regularizer by using the proximal map~\cite{Parikh.Boyd2014}.

Recently, deep learning has gained popularity in imaging inverse problems~\cite{Mousavi.etal2015, DJin.etal2017, Kang.etal2017, Ye.etal2018, Sun.etal2018, Li.etal2018, Yoo.etal2019, Goy.etal2019, Nehme.etal18}. Traditional strategy trains the convolutional neural network (CNN) to learn the direct mapping from the measurements to some ground-truth image. Despite their excellent performance in some image reconstruction problems, this strategy does not leverage the known physics of the imaging system and does not insure consistency with the measured data. In this paper, we propose SIMBA to reconcile the model-based and learning-based approaches by infusing deep denoising priors into online iterative algorithms.

\subsection{Regularization by Denoising}

RED~\cite{Romano.etal2017} is a recently introduced framework to leverage powerful image denoisers. It has been successfully applied in many regularized imaging tasks, including image deblurring~\cite{Romano.etal2017}, super-resolution~\cite{Mataev.etal2019}, and phase retrieval~\cite{Metzler.etal2018}. The framework aims to find a fixed point $\xbm^\ast$ that satisfies
\begin{equation}
\label{Eq:FixedPoints}
\Gsf(\xbm^\ast) = \nabla g(\xbm^\ast) + \tau (\xbm^\ast - \Dsf_\sigma(\xbm^\ast)) = 0,
\end{equation}
where $\nabla g$ denotes the gradient of $g$, $\Dsf_\sigma$ is the image denoiser, and $\tau>0$ adjusts the tradeoff between the data-fidelity and the prior. RED algorithms seek a vector $\xbm^\ast$ that lies in the zero set of $\Gsf: \R^n \rightarrow \R^n$
\begin{equation}
\label{Eq:ZeroSet}
\xbm^\ast \in \zer(\Gsf) \defn \{\xbm \in \R^n : \Gsf(\xbm) = 0\}.
\end{equation}
For example, the gradient-method variant of RED (denoted as GM-RED) can be implemented as
\begin{align}
\label{Eq:REDupdate}
\xbm^k &\leftarrow \xbm^{k-1} - \gamma (\nabla g(\xbm^{k-1}) + \Hsf(\xbm^{k-1})) \nonumber \\
&\text{where}\quad \Hsf(\xbm) \defn \tau(\xbm - \Dsf_\sigma(\xbm)).
\end{align}
Here, the parameter $\gamma>0$ is the step-size. When the denoiser $\Dsf_\sigma$ is locally homogeneous and has a symmetric Jacobian~\cite{Romano.etal2017, Reehorst.Schniter2019}, the operator $\Hsf$ corresponds to the gradient of the following regularizer
\begin{equation}
\label{Eq:Regularizer}
h(\xbm) = \frac{\tau}{2}\xbm^\Tsf(\xbm-\Dsf_\sigma(\xbm)).
\end{equation}
By having a closed-form objective function, one can use the classical optimization theory to analyze the convergence of RED algorithms~\cite{Romano.etal2017}. On the other hand, fixed-point convergence has also been established without having an explicit objective function~\cite{Reehorst.Schniter2019, Sun.etal2019a}. Reehorst \emph{et al.}~\cite{Reehorst.Schniter2019} have shown that RED proximal gradient methods (RED-PG) converges to a fixed point by utilizing the monotone operator theory. Sun \emph{et al.}~\cite{Sun.etal2019c} have established the explicit convergence rate for the block coordinate variant of RED (BC-RED) under a nonexpansive $\Dsf_\sigma$. In this paper, we extend these prior analyses to the randomized processing of the measurements instead of image blocks, which opens up applications to tomographic imaging with a large number of projections.

\section{Proposed Method}
\label{Sec:Proposed}
We now introduce SIMBA that combines the iterative usage of the forward model with a deep denoising prior. At each iteration, SIMBA updates $\xbm$ by combining a stochastic gradient for increasing data-consistency with a CNN denoiser for artifact reduction. SIMBA is ideal for data-intensive biomedical imaging applications where the object features are difficult to characterize using traditional regularizers.

\subsection{Iterative Online Procedure}

\begin{figure}[t]
\centering
\begin{minipage}[t]{.5\textwidth}
\begin{algorithm}[H]
\caption{$\mathsf{SIMBA}$}\label{alg:OnRED}
\begin{algorithmic}[1]
\State \textbf{input: } $\xbm^0 \in \R^n$, $\tau > 0$, $\sigma > 0$, and $B \geq 1$
\For{$k = 1, 2, \dots$}
\State $\nablahat g(\xbm^{k-1}) \leftarrow \mathsf{minibatchGradient}(\xbm^{k-1}, B)$
\State $\Gsfhat(\xbm^{k-1}) \leftarrow \nablahat g(\xbm^{k-1}) + \tau(\xbm^{k-1}-\Dsf_\sigma(\xbm^{k-1}))$
\State $\xbm^k \leftarrow \xbm^{k-1} - \gamma \Gsfhat(\xbm^{k-1})$
\EndFor\label{euclidendwhile}
\end{algorithmic}
\end{algorithm}
\end{minipage}
\end{figure}

In IDT, the data-fidelity term can be written as an average over a set of distinct components functions
\begin{equation}
\label{Eq:ComponentData}
g(\xbm)  = \frac{1}{I}\sum_{i = 1}^I g_i(\xbm),
\end{equation}
where each component function $g_i$ is evaluated only on the subset $\ybf_i$ of the full measurements $\ybf$
\begin{align}
\label{Eq:GradientIDT}
g_i(\xbm) = \mathcal{L}(\ybm_i, \Abm_i\xbm)\;,
\end{align}
where $\Lcal$ is a smooth loss function quantifying the discrepancy between the predicted measurement $\Abm_i\xbm$ and the actual measurements $\ybm_i$. For example, all the results in this paper were obtained using
$$g_i(\xbm) = \frac{1}{2}\|\ybm_i - \Abm_i\xbm\|_2^2 \quad\Rightarrow\quad \nabla g_i(\xbm) = \Abm_i^\Hsf(\Abm_i\xbm - \ybm_i).$$ 
The computation of the gradient of $g$
\begin{align}
\label{Eq:ComponentGradient}
&\nabla g(\xbm) = \frac{1}{I}\sum_{i = 1}^I \nabla g_i(\xbm),
\end{align}
is proportional to the total number of illuminations $I$. A large $I$ effectively precludes the applicability of the batch RED algorithms due to the computational cost of evaluating $\nabla g$.

SIMBA, summarized in Algorithm~\ref{alg:OnRED}, improves scalability through partial randomized processing of gradient components $\nabla g_i$ via the following \emph{minibatch} approximation of the gradient
\begin{equation}
\label{Eq:StochGrad}
\nablahat g(\xbm) = \frac{1}{B}\sum_{b = 1}^B \nabla g_{i_b}(\xbm),
\end{equation}
where $i_1, \dots, i_B$ are independent random indices that are distributed uniformly over $\{1, \dots, I\}$. Due to its ability to control the minibatch size $1 \leq B \leq I$, SIMBA benefits from considerable \emph{flexibility} for trading off different practical considerations compared to the batch RED algorithms. For example, one can consider using small minibatches ($B \ll I)$ for problems where $\nabla g$ is computationally expensive and $\Dsf_\sigma$ is relatively efficient. On the other hand, one can consider larger minibatches for problems where $\Dsf_\sigma$ is relatively slow compared to $\nabla g$. In this paper we focus on image denoisers corresponding to deep neural nets, thus obtaining $\Dsf_\sigma$ that is both fast and effective for many practical imaging problems.

\subsection{CNN-based Denoiser}
\label{Sec:CNNDenoisers}

In recent years, CNNs have been shown to achieve the state-of-the-art performance on image denoising~\cite{Zhang.etal2017, Zhang.etal2018}. We propose a simple denoising network $\DnCNNast$ as the deep learning module in \proposed. The architecture of the neural network, illustrated in Figure~\ref{fig:Schema}, is adapted from the popular DnCNN. In general, $\DnCNNast$ consists of two parts. The first part contains $N_\ell-1$ sequential composite convolutional layers, each of which has one convolutional layer followed by a rectified linear unit (ReLU) layer. The second part is a single convolutional layer that outputs the final denoised image, resulting the total number of layers in $\DnCNNast$ to be $N_\ell$. All the convolution filters are implemented with size $3\times3$, and every feature map has 64 channels. In \proposed, we apply this 2D image denoising network to the 3D sample by performing the layer-by-layer denoising along the axial direction $z$.

We generated the training dataset by adding AWGN to the natural images from BSD400 and applying standard data augmentation strategies including flipping, rotating, and rescaling. Note that our training dataset does not include any biomedical image. We employed the residual learning technique~\cite{He.etal2016} in $\DnCNNast$ so that the network is forced to learn the noise residual in the noisy input. $\DnCNNast$ was trained to minimize the following loss
\begin{equation}
\label{Eq:Loss}
\mathcal{L}_\theta= \frac{1}{p} \sum_{i=1}^{p} \left\{\|f_\theta(\xbm_i) - \ybm_i\|_2^2 + \rho\|f_\theta(\xbm_i) - \ybm_i\|_1\right\},
\end{equation}
where $\xbm_i$ is the noisy input, $\ybm_i$ is the noise, and $f_\theta(\xbm)$ represents the noise predicted by the neural network. Eq.~\eqref{Eq:Loss} penalizes both the mean squared error (MSE) and the mean absolute error (MAE) between the estimated noise and the ground truth. A \emph{loss parameter} $\rho>0$ is thus introduced to adjust the tradeoff between the two errors for the best training performance. Our results show that our simple $\DnCNNast$ is competitive with traditional denoisers in terms of the imaging quality.

\begin{figure}[t]
\centering\includegraphics[width=8.5cm]{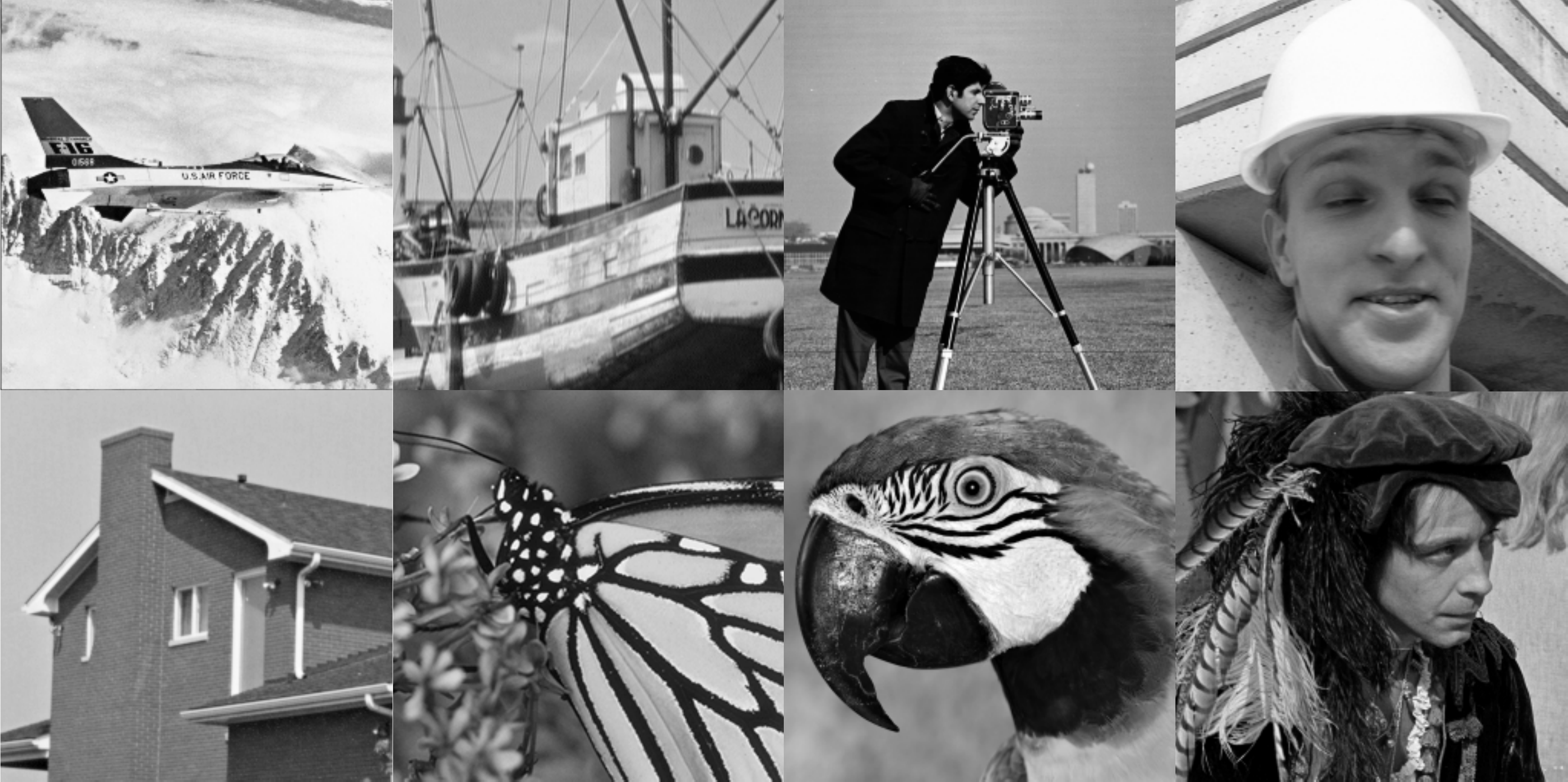}
\caption{Eight test images used in the experiments. Top row from left to right: \textit{Aircraft}, \textit{Boat}, \textit{Cameraman}, \textit{Foreman}. Bottom row from left to right: \textit{House}, \textit{Monarch}, \textit{Parrot}, \textit{Pirate}.}
\label{fig:truth}
\end{figure}

\section{Convergence Analysis}
\label{Sec:Convergence}

Our analysis relies on the fixed-point convergence of averaged operators, which is well known as the Krasnosel'skii-Mann theorem~\cite{Bauschke.Combettes2017}. Here, we extend the result to the iterative online algorithms under the RED formulation and show the worst-case convergence rates. Note that our analysis does not assume that the denoiser corresponds to any explicit RED regularizer. We first introduce the assumptions necessary for our analysis and then present the main results.

\begin{assumption}
\label{As:DataFitConvexity}
We make the following assumptions on the data-fidelity term $g$:
\setlist{nolistsep,leftmargin=*}
\begin{enumerate}[label=(\alph*)]
\item The component functions $g_i$ are all convex and differentiable with the same Lipschitz constant $L > 0$.
\item At every iteration, the gradient estimate is unbiased and has a bounded variance:
$$\E\left[\nablahat g(\xbm)\right] = \nabla g(\xbm),\;\; \E\left[\left\|\nabla g(\xbm)-\nablahat g(\xbm)\right\|_2^2\right] \leq \frac{\nu^2}{B},$$
for some constant $\nu > 0$.
\end{enumerate}
\end{assumption}

\noindent
Assumption~\ref{As:DataFitConvexity} (a) implies that the overall data-fidelity $g$ is also convex and has Lipschitz continuous gradient with constant $L$. Assumption~\ref{As:DataFitConvexity} (b) assumes that the minibatch gradient is an unbiased estimate of the full gradient. The bounded
variance assumption is a standard assumption used in the analysis of online and stochastic algorithms~\cite{Ghadimi.Lan2016, Bernstein.etal2018, Xu.etal2018}

\begin{assumption}
\label{As:NonemptySet}
The operator $\Gsf$ is such that $\zer(\Gsf) \neq \varnothing$.
\end{assumption}
\noindent
Assimption~\ref{As:NonemptySet} is a mild assumption that simply asserts the existence of a solution to~\eqref{Eq:FixedPoints}. It is related to the existence of stationary points in traditional smooth optimization~\cite{Nesterov2004}.

\begin{assumption}
\label{As:NonexpansiveDen}
Given $\sigma>0$, the denoiser $\Dsf_\sigma$ is a nonexpansive operator such that
$$\| \Dsf_\sigma(\xbm) -  \Dsf_\sigma(\ybm)\|_2 \leq \|\xbm-\ybm\|_2\quad \xbm,\ybm \in \R^n,$$
\end{assumption}

\noindent
Nonexpansive variants of several widely used denoisers, such as NLM and DnCNN, have been developed in~\cite{Sreehari.etal2016, Teodoro.etal2019, Ryu.etal2019,Sun.etal2019c}. Under the above assumptions, we can establish the following for \proposed.

\begin{table}[!t]
	\centering
	\scriptsize
	\textbf{\caption{\label{tab:optical} List of parameters of the experimental setup}}
	\begin{tabular*}{8.8cm}{C{1pt}C{100pt}C{52pt}C{54pt}} 			
		\toprule
		\multicolumn{2}{c}{\textbf{Experimental parameters}} & Simulations (\ref{sec:sim}) & Experiments (\ref{sec:exp})\\	 
		\cmidrule(lr){1-2} \cmidrule(lr){3-4}
		$\lambda$   & wavelength of LED light & \SI{630}{\nano\metre}  & \SI{630}{\nano\metre} \\
		$\epsilon_{b}$ & background medium index & $1.33$ & $1.33$ \\
		$z_\text{LED}$ & axial position of LEDs& \SI{-70}{\milli\metre} & \SI{-79}{\milli\metre}\\ 
		$z$ & axial position of the sample& \SI{0}{\micro\metre} & $(-20,100)$~\SI{}{\micro\metre}\\ 
		MO   & microscope objectives & 40$\times$ & 10$\times$ \\
		NA & numerical aperture & 0.65 & 0.25 \\
		\bottomrule
	\end{tabular*}
\end{table}

\begin{table}[!t]
	\centering
	\scriptsize
	\textbf{\caption{\label{tab:param} List of algorithmic hyperparameters}}
	\begin{tabular*}{8.8cm}{C{1pt}C{100pt}C{52pt}C{54pt}} 			
		\toprule
		\multicolumn{2}{c}{\textbf{Hyperparameters}} & Simulations (\ref{sec:sim}) & Experiments (\ref{sec:exp})\\	 
		\cmidrule(lr){1-2} \cmidrule(lr){3-4}
		$\xbm^{0}$   & initial point of reconstructions & $\zerobm$  & $\zerobm$ \\
		$B$ & minibatch size & $20$ & $10$ \\
		$I$ & batch size & $60$ & $89$\\ 
		$\gamma$   & step size & $\frac{1}{L+2\tau}$ & $\frac{1}{L+2\tau}$ \\
		$\sigma$ & input noise level for $\DnCNNast$ & $10$ & $5$ \\
		$\rho$ & loss function parameter & $0$ & $1$ \\
		$N_\ell$ & number of layers in $\DnCNNast$ & $7$ & $10$ \\
		\multirow{2}{*}{$\tau$} & level of regularization in GM-RED and \proposed & optimized for each image & optimized for the dataset\\ 
		\bottomrule
	\end{tabular*}
\end{table}

\begin{figure}[t]
\centering\includegraphics[width=8.5cm]{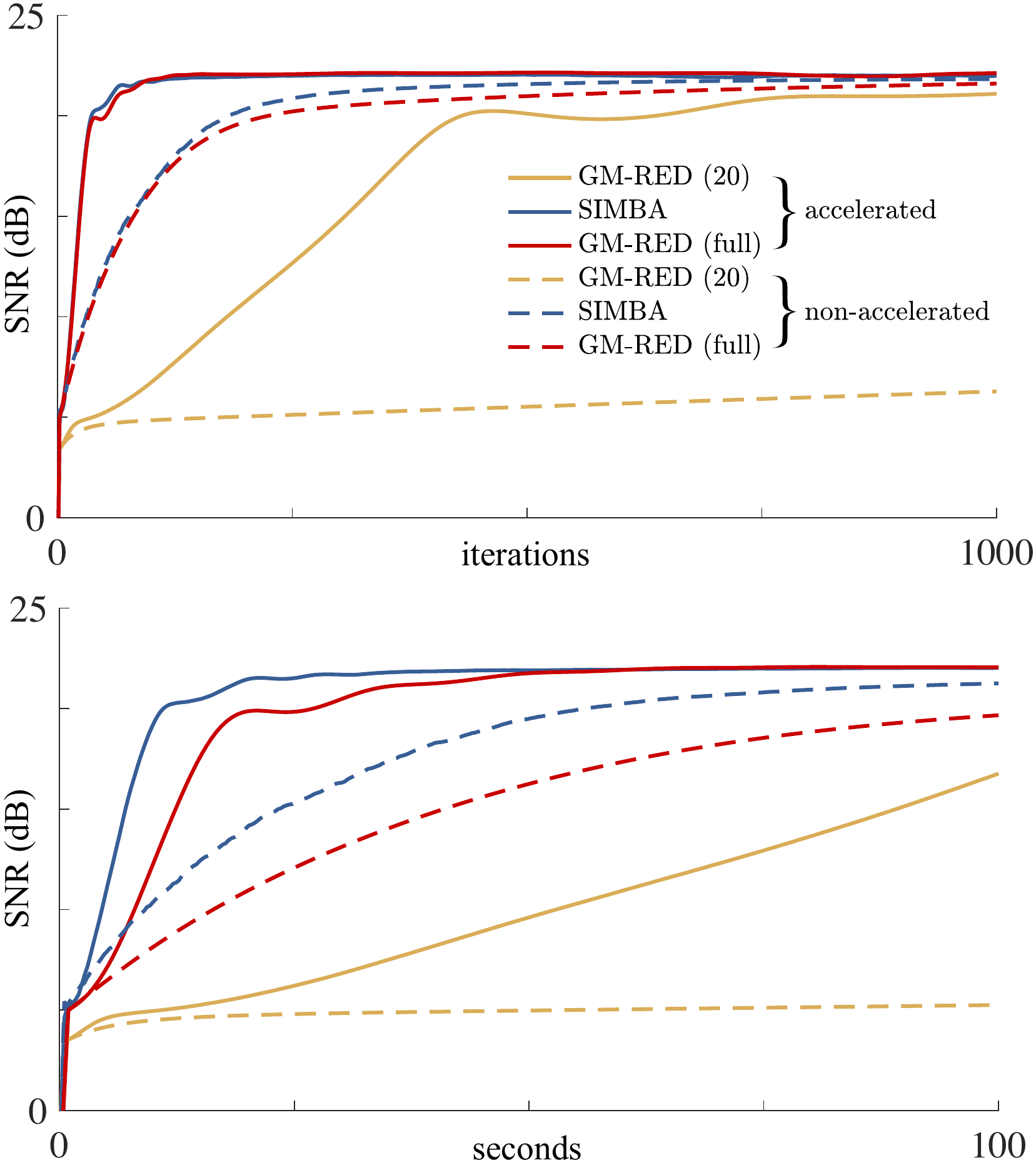}
\caption{Illustration of convergence in SNR of \proposed~with minibatch size $B=20$ under the $\DnCNNast$ denoiser. The top and bottom figures plot the SNR values against the number of itartions and running time, respectively. Two batch algorithms, GM-RED (20) and GM-RED (full), are plotted for comparison. Under the same per-iteration complexity, \proposed~converges to significantly higher SNR than GM-RED (20) due to its actual usage of the full data. Moreover, online processing makes \proposed~converge significantly faster than GM-RED (full). The acceleration is due to the lower computational cost of processing a small random subset of the full data. The same trend is observed for both accelerated and normal versions of the algorithms.}
\label{fig:SNR}
\end{figure}

\begin{table*}[t]
	\centering
	\scriptsize
	\textbf{\caption{\label{tab:SNR} Optimized SNR for each test image in dB}}
	\begin{tabular*}{405pt}{L{28pt}C{28pt}C{35pt}C{28pt}C{28pt}C{28pt}C{28pt}C{28pt}C{28pt}C{28pt}} 	
		\toprule
		\textbf{Algorithms} & GM (20) & SGM & GM (full) & \multicolumn{2}{c}{GM-RED (20)} & \multicolumn{2}{c}{\proposed} & \multicolumn{2}{c}{GM-RED (full)}\\
		\cmidrule(lr){2-2} \cmidrule(lr){3-3} \cmidrule(lr){4-4} \cmidrule(lr){5-6} \cmidrule(lr){7-8} \cmidrule(lr){9-10}
		\textbf{Denoisers} & --- & --- & --- & BM3D & DnCNN$^\ast$ & BM3D & DnCNN$^\ast$ & BM3D & DnCNN$^\ast$   \\
		\midrule
		\textit{Aircraft} & 17.44     & 18.00     & 18.01     & 18.82       & 19.85        & 19.65 		&20.48        &19.47         & 20.44     \\
		\textit{Boat}    & 18.09     & 18.78     & 18.82     & 19.96       & 20.40        & 21.23		&21.29        &21.12         & 21.55     \\
		\textit{Cameraman}   & 16.27    & 17.06   & 17.08     & 17.36     & 18.83       & 18.36        & 19.59         & 18.34        & 19.32     \\
		\textit{Foreman} & 22.78    & 23.81     & 23.88     & 25.24     & 27.35       & 26.76        & 28.55     & 26.78        & 28.71		\\
		\textit{House}  & 19.89    & 20.73     & 20.79     & 22.02     & 22.57       & 22.88        & 23.47     & 22.86        & 23.56		\\
		\textit{Monarch}  & 17.70    & 18.65     & 18.69     & 19.16     & 21.50       & 20.37        & 22.63     & 20.42        & 22.81		\\
		\textit{Parrot}  & 17.85    & 18.69     & 18.72     & 18.38     & 19.22       & 19.11        & 19.79     & 19.14        & 20.05		\\
		\textit{Pirate}  & 19.17    & 19.79     & 19.81     & 19.76     & 20.11       & 20.41        & 20.98     & 20.44        & 21.06		\\
		\midrule
		\textbf{Average} & 18.65     & 19.44     & 19.48     & 20.09       & 21.23        & 21.10 	& 22.10       & 21.07        & 22.19	   \\
		\bottomrule
	\end{tabular*}
\end{table*}

\begin{theorem}
\label{Thm:ConvThm1}
Run \proposed~for $t \geq 1$ iterations under Assumptions~\ref{As:DataFitConvexity}-\ref{As:NonexpansiveDen} using a fixed step-size $0 < \gamma \leq 1/(L+2\tau)$ and a fixed minibatch size $B\geq1$. Then, we have
\begin{align}
\E\left[\frac{1}{t}\sum_{k = 1}^t \|\Gsf(\xbm^{k-1})\|_2^2\right]  \leq(L+2\tau) \left[\frac{\|\xbm^0-\xbmast\|_2^2}{\gamma t} + \frac{\gamma\nu^2}{B} \right]. \nonumber
\end{align}
\end{theorem}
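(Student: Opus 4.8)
The plan is to reduce the analysis to a one-step energy inequality for $V_k \defn \|\xbm^k - \xbmast\|_2^2$, where $\xbmast \in \zer(\Gsf)$ (nonempty by Assumption~\ref{As:NonemptySet}), driven by a single structural fact: the operator $\Gsf = \nabla g + \tau(\Isf - \Dsf_\sigma)$ is $\tfrac{1}{L+2\tau}$-cocoercive. I would establish this first. By Assumption~\ref{As:DataFitConvexity}(a) and the Baillon--Haddad theorem, $\nabla g$ is $\tfrac{1}{L}$-cocoercive. By Assumption~\ref{As:NonexpansiveDen}, expanding $\langle (\Isf-\Dsf_\sigma)(\xbm)-(\Isf-\Dsf_\sigma)(\ybm),\,\xbm-\ybm\rangle$ and using $\|\Dsf_\sigma(\xbm)-\Dsf_\sigma(\ybm)\|_2 \le \|\xbm-\ybm\|_2$ shows $\Isf-\Dsf_\sigma$ is $\tfrac12$-cocoercive, hence $\tau(\Isf-\Dsf_\sigma)$ is $\tfrac{1}{2\tau}$-cocoercive. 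The key (and sharp) step is the harmonic-combination identity: the sum of an $\alpha$- and a $\beta$-cocoercive operator is $\tfrac{\alpha\beta}{\alpha+\beta}$-cocoercive, which follows by adding the two defining inequalities and completing a square. With $\alpha=\tfrac1L$ and $\beta=\tfrac1{2\tau}$ this yields exactly $\tfrac{\alpha\beta}{\alpha+\beta}=\tfrac{1}{L+2\tau}$.

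Second, I would expand the update $\xbm^k = \xbm^{k-1} - \gamma\Gsfhat(\xbm^{k-1})$ as
\[
V_k = V_{k-1} - 2\gamma\big\langle \Gsfhat(\xbm^{k-1}),\,\xbm^{k-1}-\xbmast\big\rangle + \gamma^2\big\|\Gsfhat(\xbm^{k-1})\big\|_2^2,
\]
and take the expectation conditioned on $\xbm^{k-1}$. Assumption~\ref{As:DataFitConvexity}(b) gives $\E[\Gsfhat(\xbm^{k-1})\mid\xbm^{k-1}] = \Gsf(\xbm^{k-1})$, so the cross term reduces to $\langle\Gsf(\xbm^{k-1}),\xbm^{k-1}-\xbmast\rangle$, which the cocoercivity bound (applied at $\xbmast$, where $\Gsf(\xbmast)=0$) controls from below by $\tfrac{1}{L+2\tau}\|\Gsf(\xbm^{k-1})\|_2^2$. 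For the quadratic term, since $\Gsfhat-\Gsf = \nablahat g - \nabla g$ is the only stochastic part (the denoiser term is deterministic given $\xbm^{k-1}$), unbiasedness kills the cross term in the bias--variance split and the bounded-variance hypothesis gives $\E[\|\Gsfhat(\xbm^{k-1})\|_2^2\mid\xbm^{k-1}] \le \|\Gsf(\xbm^{k-1})\|_2^2 + \nu^2/B$. Collecting terms,
\[
\E[V_k\mid\xbm^{k-1}] \le V_{k-1} - \Big(\tfrac{2\gamma}{L+2\tau}-\gamma^2\Big)\big\|\Gsf(\xbm^{k-1})\big\|_2^2 + \tfrac{\gamma^2\nu^2}{B}.
\]

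Finally, the step-size choice $\gamma \le 1/(L+2\tau)$ ensures $\tfrac{2\gamma}{L+2\tau}-\gamma^2 \ge \tfrac{\gamma}{L+2\tau}$, so after taking total expectation, telescoping over $k=1,\dots,t$, discarding $\E[V_t]\ge 0$, and dividing through by $\tfrac{\gamma t}{L+2\tau}$, I would recover precisely the stated bound. The main obstacle is the first step: proving the sharp constant $\tfrac{1}{L+2\tau}$ rather than the loose $\tfrac{1}{2}\min(\tfrac1L,\tfrac1{2\tau})$ that a naive triangle-inequality split produces—it is the harmonic-combination identity for cocoercive operators that makes the prefactor match $L+2\tau$ exactly and that aligns the admissible step size with $\gamma\le1/(L+2\tau)$. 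The remaining steps are routine stochastic-approximation bookkeeping, the one point requiring care being to keep the deterministic denoiser term out of the variance estimate.
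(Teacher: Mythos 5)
Your proposal is correct, and its skeleton is the same as the paper's: prove that $\Gsf = \nabla g + \tau(\Isf-\Dsf_\sigma)$ is $\tfrac{1}{L+2\tau}$-cocoercive, use unbiasedness plus the variance bound to get the one-step inequality $\E\left[\|\xbm^k-\xbmast\|_2^2 \mid \xbm^{k-1}\right] \leq \|\xbm^{k-1}-\xbmast\|_2^2 - \tfrac{\gamma}{L+2\tau}\|\Gsf(\xbm^{k-1})\|_2^2 + \tfrac{\gamma^2\nu^2}{B}$, then rearrange, telescope, and divide; your bias--variance split of $\E[\|\Gsfhat(\xbm^{k-1})\|_2^2]$ is algebraically identical to the paper's grouping of $\gamma(\Gsf\xbm - \Gsfhat\xbm)$ as a zero-mean perturbation of the deterministic step $\xbm - \xbmast - \gamma\Gsf\xbm$. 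The one place where you genuinely diverge is the route to the sharp cocoercivity constant. The paper works in the nonexpansive-operator calculus: from $(1/L)$-cocoercivity of $\nabla g$ it gets nonexpansiveness of $\Isf - (2/L)\nabla g$ (Proposition~\ref{Prop:NonexpEquiv}), notes $\Isf - (1/\tau)\Hsf = \Dsf_\sigma$ is nonexpansive by Assumption~\ref{As:NonexpansiveDen}, writes $\Isf - \tfrac{2}{L+2\tau}\Gsf$ as a convex combination of these two operators (Proposition~\ref{Prop:BlockConvNonexp}), and converts back to cocoercivity. You instead stay entirely in the cocoercivity calculus and invoke a harmonic-mean sum rule: an $\alpha$-cocoercive plus a $\beta$-cocoercive operator is $\tfrac{\alpha\beta}{\alpha+\beta}$-cocoercive, which with $\alpha = 1/L$, $\beta = 1/(2\tau)$ gives exactly $\tfrac{1}{L+2\tau}$. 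That rule is valid and your proof sketch is right: adding the two defining inequalities, the claim reduces to $\alpha\|\ubm\|_2^2 + \beta\|\vbm\|_2^2 \geq \tfrac{\alpha\beta}{\alpha+\beta}\|\ubm+\vbm\|_2^2$, which is the perfect square $\tfrac{1}{\alpha+\beta}\|\alpha\ubm - \beta\vbm\|_2^2 \geq 0$. The two lemmas are equivalent in content (they are linked by the equivalences in Proposition~\ref{Prop:NonexpEquiv}), so neither is stronger; yours is more self-contained and makes the sharpness of the constant transparent, and it correctly explains why a naive splitting would only give $\tfrac12\min(1/L, 1/(2\tau))$, while the paper's version reuses its appendix toolkit of nonexpansive/averaged-operator equivalences, tying the analysis to the Krasnosel'skii--Mann framework it cites.
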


\begin{proof}
See Appendix~\ref{Sec:Proof}.
\end{proof}
\noindent
Thereom~\ref{Thm:ConvThm1} is analogous to the convergence of the \emph{minibatch stochastic gradient descent (SGD)}~\cite{Robbins.Monro1951} in the sense that the convergence can be established up to an error term that depends on $\gamma$ and $B$. Similarly, the accuracy of the expected convergence of SIMBA to $\zer(\Gsf)$ improves with smaller $\gamma$ and larger $B$. For example, by setting ${\gamma = 1/[(L+2\tau)\sqrt{t}]}$, we get
$$\E \left[\min_{k \in \{1, \dots, t\}} \|\Gsf(\xbm^{k-1})\|_2^2\right] \leq \E\left[\frac{1}{t}\sum_{k = 1}^t \|\Gsf(\xbm^{k-1})\|_2^2\right]\leq \frac{C}{\sqrt{t}},$$
where $C$ is some positive constant.

Finally, note that the analysis in Theorem~\ref{Thm:ConvThm1} only provides \emph{sufficient conditions} for the convergence of SIMBA. As corroborated by our numerical studies in Section~\ref{Sec:Experiments}, the actual convergence of SIMBA is more general and often holds beyond nonexpansive denoisers (such as BM4D). One plausible explanation for this is that such denoisers are \emph{locally nonexpansive} over the set of input vectors used in testing (see also discussion in~\cite{Ryu.etal2019}). On the other hand, the recent techniques for spectral-normalization of deep neural nets~\cite{Miyato.etal2018, Sedghi.etal2019, Gouk.etal2018} provide a convenient tool for building \emph{globally nonexpansive} neural denoisers that result in provable convergence of SIMBA.

\begin{figure}[t]
\centering\includegraphics[width=8.5cm]{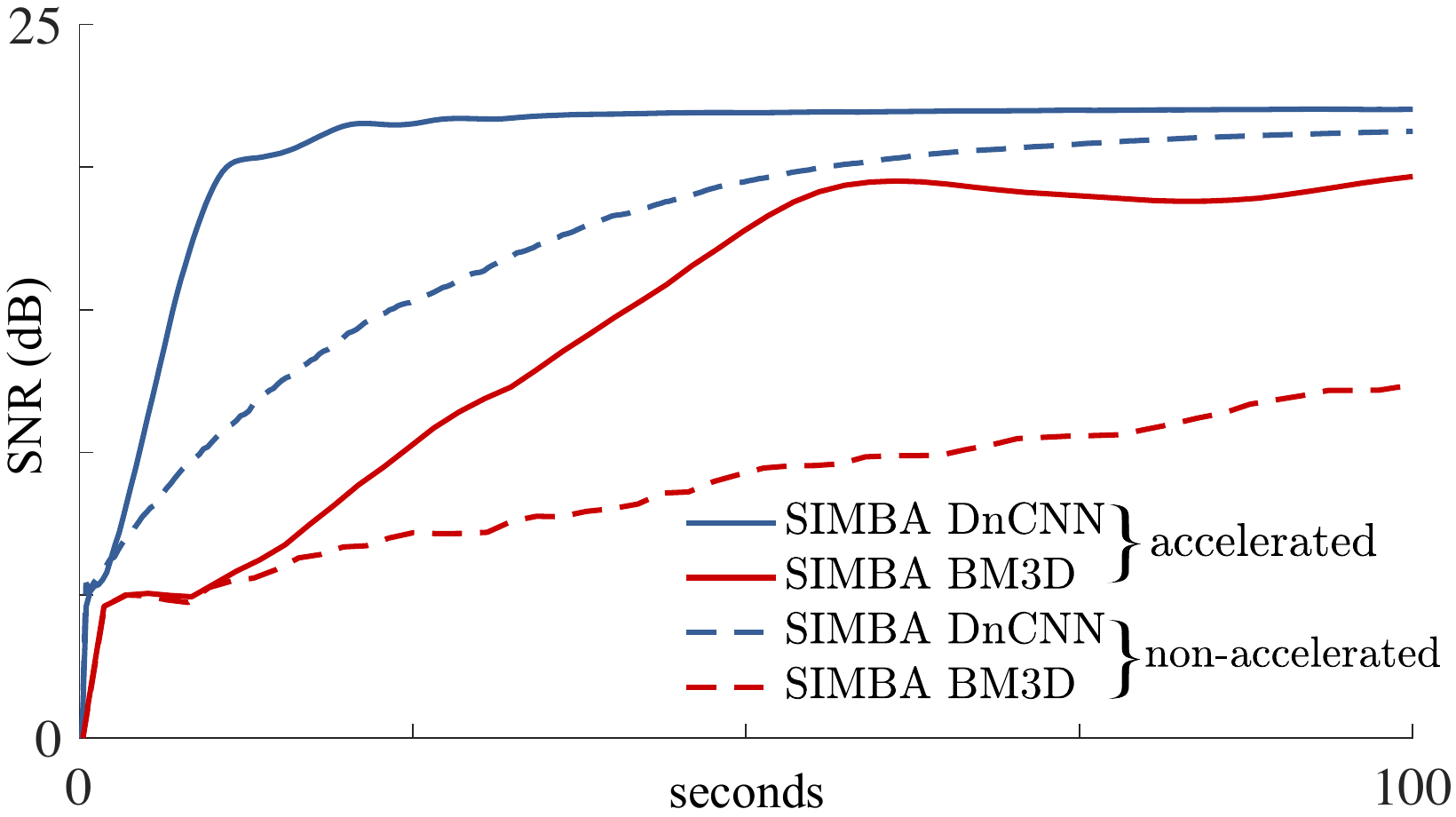}
\caption{Illustration of the convergence speed of SIMBA with minibatch size $B=20$ under the $\DnCNNast$ and BM3D denoisers. The figure plots the SNR values against the running time in seconds. SIMBA DnCNN is significantly faster due to the fast GPU implementation of the denoiser.}
\label{fig:Comparison}
\end{figure}

\begin{figure*}[t]
	\centering\includegraphics[width=\linewidth]{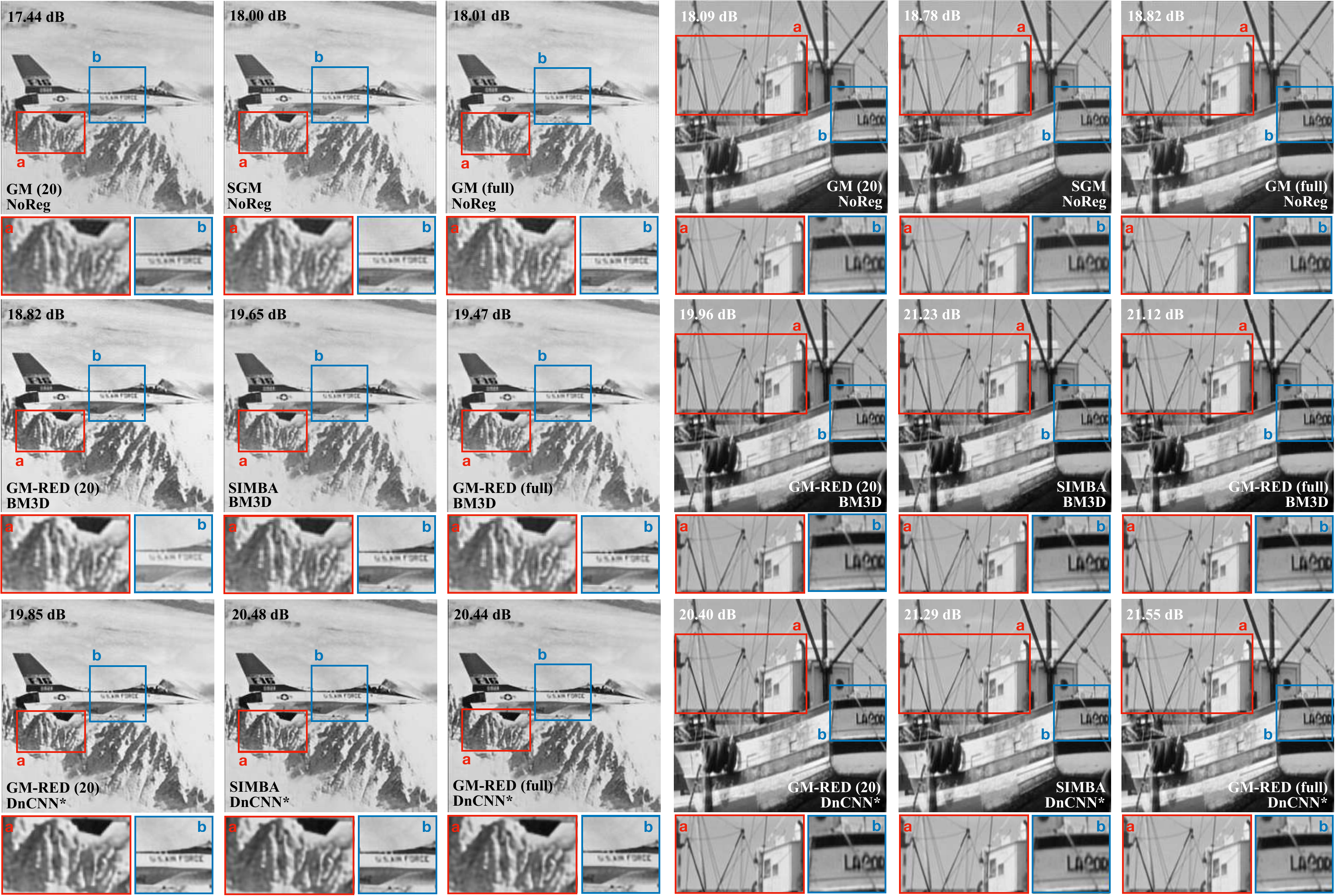}
\caption{Visual examples of reconstructed \textit{Aircrafts} (left) and \textit{Boat} (right) images by different algorithms. Three columns correspond to algorithms using fixed 20, random 20 out of 60, and full 60 measurements, respectively. The first row presents the unregularized results and the second and third row show the results given by a well-known BM3D denoiser and a state-of-the-art deep learning prior, respectively. Differences are zoomed in using boxes inside the images. Each image is labeled by its SNR (dB) with respect to the original image. Note that our proposed algorithm \proposed~recovers the details lost by the batch algorithm with the same computational cost and achieves the same level of SNR and visual quality as the full batch algorithm.}
\label{fig:examples}	
\end{figure*}

\section{Experimental Validation}
\label{Sec:Experiments}

In this section, we validate \proposed~on both simulated and experimental data. We first numerically demonstrate the efficiency and practical convergence of \proposed~in simulations. Next, we apply \proposed~to reconstruct a 3D model from a set of real intensity-only measurements. Our results highlight the applicability and effectiveness of \proposed~for the iterative inversion in optical tomography.

\subsection{Setup}

In simulations, we reconstruct eight grayscale natural images, representing the phase component of the complex permittivity contrast, displayed in Figure \ref{fig:truth}. They are assumed to be on the focal plane $z=0$\,\SI{}{\micro\meter} with LEDs located at $z_\text{LED}=$~\SI{-70}{\milli\meter}. We generate $I=60$ simulated intensity measurements with 40$\times$ microscope objectives (MO) and 0.65 numerical aperture (NA). All simulated measurements are corrupted by AWGN corresponding to 20 dB of \emph{input signal-to-noise ratio (SNR)}. As a quantitative metric for measuring the quality of reconstructions, we use the SNR defined as follows 
$$\operatorname{SNR}(\hat{\ybm}, \ybm) \triangleq \max _{a, b \in \mathbb{R}}\left\{20 \operatorname{log} _{10}\left(\frac{\|\ybm\|_{\ell_{2}}}{\|\ybm-a \hat{\ybm}+b\|_{\ell_{2}}}\right)\right\}$$
where $\hat{\ybm}$ represents the noisy vector and $\ybm$ denotes the ground truth. In experiments, we recover a 3D algae sample from real IDT measurements. The 3D sample is located over the range $(-20,100)$~\SI{}{\micro\metre} and  $z_\text{LED}=$~\SI{-79}{\milli\meter}. We set the slice spacing as \SI{5}{\micro\metre}, so each slice represents the average over the sample thickness. We take $I=89$ measurements with 10$\times$ MO and 0.25 NA for reconstruction. We refer to Table \ref{tab:optical} for the detailed summary of the experimental parameters. All experiments in this paper were performed on a machine equipped with an Intel Xeon E5-2620 v4 Processor that has 4 cores of 2.1 GHz and 256 GBs of DDR memory. We trained all neural nets using NVIDIA RTX 2080 GPUs.

The algorithmic hyperparameters are summarized in Table \ref{tab:param}. All algorithms start from $\xbm^{0}=\zerobm$. We trained $\DnCNNast$ for the removal of AWGN at four noise levels corresponding to $\sigma \in \{5, 10, 15, 20\}$. The same set of $\sigma$ is used for BM3D. All algorithmic parameters are optimized for the best performance. 

\begin{figure*}[t]
	\centering\includegraphics[width=0.9\linewidth]{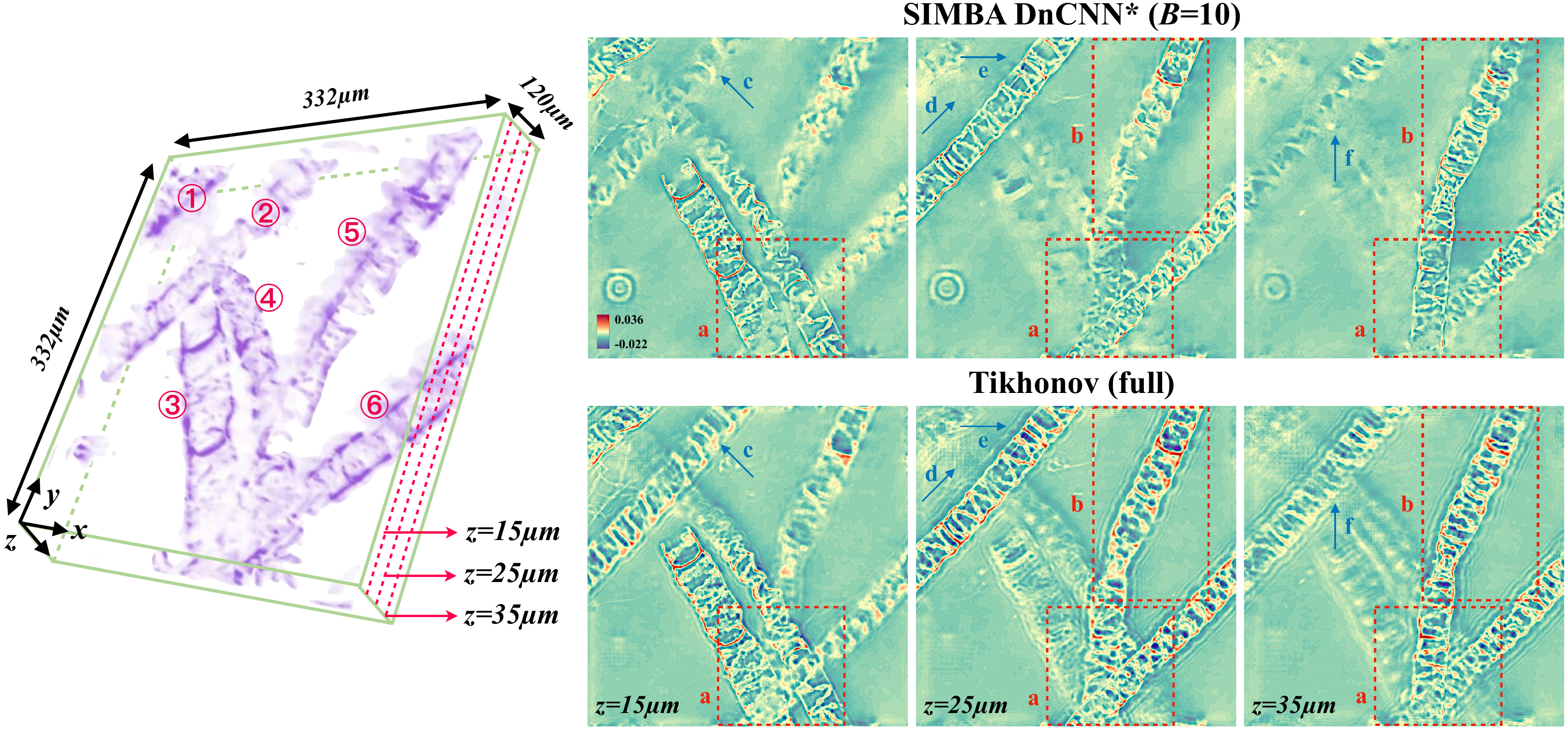}
	\caption{Visualization of the 3D algae reconstruction. Algae are labeled by circled numbers. We select three slices of the sample to illustrate the improvement of performance by our proposed method \proposed~with $B=10$ over Tikhonov (full), which uses all 89 measurements. Regions (a) and (b) demonstrate the better axial sectioning effect of \proposed~and arrows (c) to (f) point out the areas where \proposed~suppresses the artifacts present in the Tikhonov reconstruction.}
	\label{fig:3D}
\end{figure*}

\begin{figure*}[h]
	\centering\includegraphics[width=0.9\linewidth]{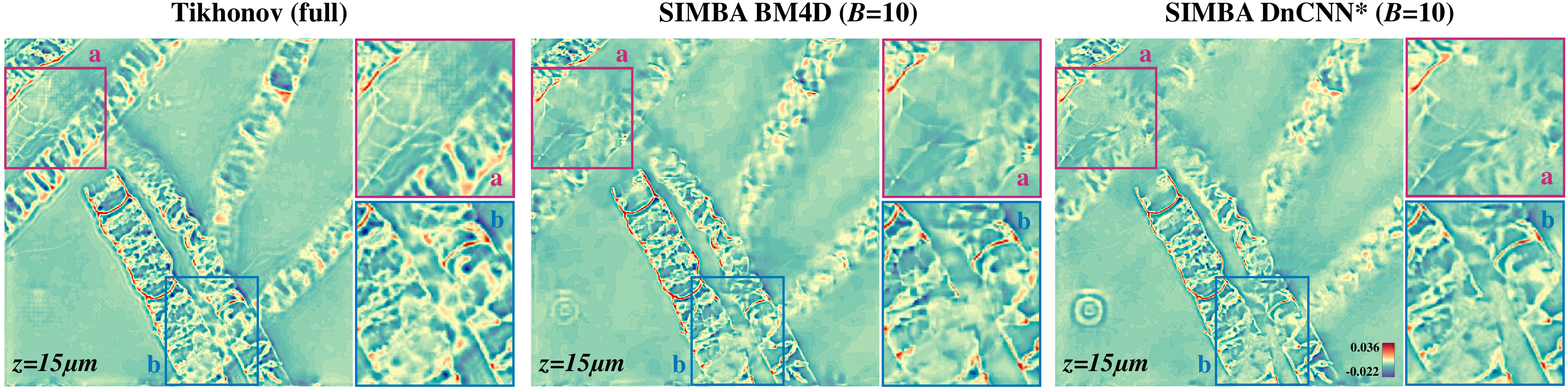}
	\caption{Comparison of \proposed~under BM4D and $\DnCNNast$ against Tikhonov. Regions (a) and (b) are zoomed in to highlight visual differences. Tikhonov reconstructed image contains grid-shape artifacts and interfering contents from other slices, while BM4D generates blocks and nonsmoothness. \proposed~under $\DnCNNast$ produces the most real recovery with the clearest shape of the algae.}
	\label{fig:Algae}
\end{figure*}

\subsection{Simulated Data}
\label{sec:sim}

In this section, we numerically illustrate the advantages of \proposed~in tomographic imaging over the batch GM-RED. The advantages are: (1) better SNR under a limited memory budget; (2) better time efficiency when all the measurements are used.

\begin{figure*}[t]
	\centering\includegraphics[width=0.90\linewidth]{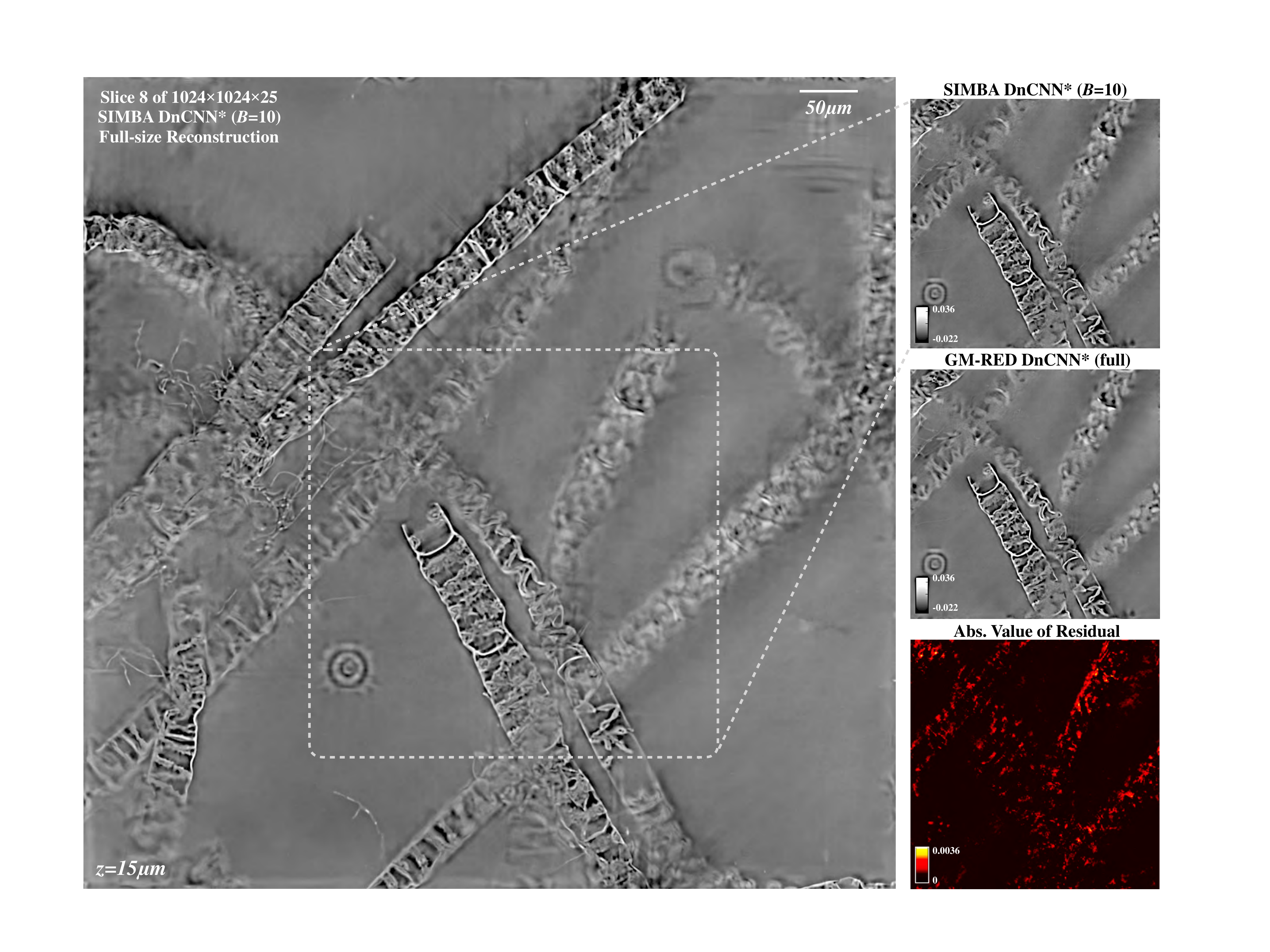}
	\caption{A slice from the full $1024\times1024\times25$ reconstruction by \proposed~under $\DnCNNast$. On the right is a comparison between \proposed~and the full batch results for the dotted region. The two reconstructions are visually indistinguishable, and the absolute value of the residual between them highlights the numerical proximity of \proposed~to the full batch reconstruction. Note the small numerical scale of the residual compared to that of the two reconstructions.}
	\label{fig:residual}
\end{figure*}

\renewcommand{\arraystretch}{1.1}

\begin{table*}[t]
	\centering\scriptsize
	\textbf{\caption{\label{Tab:Memo}Per-iteration memory usage specification for processing our experimental data}}
	\begin{tabular*}{392pt}{R{15pt}C{30pt}C{40pt}C{71pt}C{40pt}C{71pt}C{40pt}C{40pt}} 	
		\toprule
		\multicolumn{2}{c}{\textbf{Algorithms}} & & \multicolumn{2}{c}{\proposed~(10)} & \multicolumn{2}{c}{GM-RED (full)} & \\
		\cmidrule(lr){4-5} \cmidrule(lr){6-7}
		\multicolumn{2}{c}{\textbf{Variables}} & Data type &  Shape & Size & Shape & Size   \\
		\midrule
		\multirow{2}{*}{$\{\Abm_i\}$}   & phase   & complex64 & $1024\times1024\times25\times10$ & 3.91 Gb & $1024\times1024\times25\times89$ & 34.77 Gb   \\[0.5ex]
		 & absorption   &  complex64  & $1024\times1024\times25\times10$ & 3.91 Gb & $1024\times1024\times25\times89$ & 34.77 Gb  \\[0.5ex]
		\multicolumn{2}{c}{$\{\ybm_i\}$}  &  complex128   & $1024\times1024\times10$  & 0.31 Gb & $1024\times1024\times89$  & 2.78 Gb  \\[0.5ex]
		\multicolumn{2}{c}{others combined} & ---  & ---  & 3.13 Gb & --- & 3.13 Gb \\
		\midrule
		\multicolumn{2}{c}{\textbf{Total}} & ---  & ---  & \colorbox{lightgreen}{\textbf{11.26 Gb}} & ---  & 75.45 Gb  \\
		\bottomrule
	\end{tabular*}
\end{table*}

Figure~\ref{fig:SNR} (top) plots the average SNR over test images against the iteration number for \proposed~and GM-RED (20), both using $\DnCNNast$ as the denoiser. GM-RED (20) uses a fixed set of 20 (out of 60) measurements, while \proposed~selects a random subset of 20 at every iteration. Under the same computational complexity, \proposed~achieves a SNR boost of about 1 dB over GM-RED (20) because the former has access to all the measurements. Visual examples are presented in Figure \ref{fig:examples}. 
As a reference, we also plot the SNR for GM-RED using all 60 measurements, denoted as GM-RED (full).

Figure~\ref{fig:SNR} (bottom) highlights the faster time convergence of \proposed~compared to GM-RED (full) to the same level of SNR. Figure~\ref{fig:examples} highlights that the SNR values and the visual quality obtained by \proposed~and GM-RED (full) are nearly identical. \proposed,~however, significantly reduces the reconstruction time by processing one third of all measurements at each iteration. Specifically, the average per-iteration times of GM-RED (20), \proposed, and GM-RED (full) are 0.30 second, 0.31 second, and 0.52 second, respectively. We also note that by processing only a subset of measurements, \proposed~leads to a more favorable tradeoff between computational cost and memory compared to GM-RED (full). This makes \proposed~beneficial for processing datasets containing a large number of tomographic measurements. The convergence speed of SIMBA can be significantly improved by using deep learning denoisers implemented on GPUs. This is highlighted in Figure~\ref{fig:Comparison}, where SIMBA $\DnCNNast$ is compared against SIMBA BM3D.

Table \ref{tab:SNR} shows final SNRs of all reconstructions we performed. We run all simulations using the accelerated versions of these algorithms, which are analogous to the accelerated gradient method by Nesterov~\cite{Nesterov2004}. Empirically, they converge to the same solution as the non-accelerated counterparts. For reference, we show the evolution of SNR for non-accelerated versions by the dotted lines in Figure \ref{fig:SNR}. Table \ref{tab:SNR} shows that our DnCNN denoiser has higher average SNR than BM3D. The compatibility of \proposed~with $\DnCNNast$, which is a low-complexity denoiser, increases the potential of applying \proposed~to large scale image reconstructions. 

\subsection{Experimental IDT Dataset}
\label{sec:exp}

In this section, we use \proposed~to reconstruct a 3D algae sample of $1024\times1024\times25$ pixels from 89 high-resolution measurements. The large sample volume dramatically increase the memory usage and computational cost, and prohibits the applicability of the full batch algorithms. Experimental results show that \proposed~successfully overcomes these difficulties by processing a small subset of all measurements ($B=10$) at every iteration and leads to significant performance improvements compared to the method reported in~\cite{Ling.etal18}.

Figure \ref{fig:3D} provides a 3D visualization of the phase component of the image recovered by SIMBA, with different algae labeled by circled numbers (there are 6 of them). Figure~\ref{fig:Algae} compares three slices of our SIMBA results and the Tikhonov (full) results obtained by algorithm in \cite{Ling.etal18}, which uses all 89 measurements. As discussed in~\cite{Ling.etal18}, the DC component of the phase is lost in the IDT forward model, we thus set the mean of all the results to the one of the Tikhonov reconstruction for a more uniform comparison. We evaluate the quality of different reconstructions by comparing their axial sectioning effect and the ability to eliminates artifacts. In the 3D tomographic model with strong sectioning effect, a pattern emerges only in the slice it belongs to and fades away as we go axially to different depths. Sectioning enables us to better predict the axial location of the patterns within a 3D object and thus better understand its internal structure, which is crucial for biomedical imaging applications. While Tikhonov regularization is attractive from computational perspective, it is known that to lead to excessive smoothing. This complicates the understanding of the axial structure of the sample. On the other hand, by leveraging the $\DnCNNast$ prior, \proposed~improves the performance, while also mitigating the computational complexity with online processing. Our results show that \proposed~with $\DnCNNast$ enables better sectioning of the object compared to the Tikhonov prior. For example, maintaining the clarity and sharpness of algae \ctwo~in slice $z=$\,\SI{25}{\micro\meter}, \proposed~successfully reduces the artifacts generated by the content of adjacent slices, which exist in the region (a) of Tikhonov. In the other two slices, algae \ctwo~fades away and does not generate strong shadowy artifacts as indicated by arrows (c) and (f). By horizontally comparing the two rows, the algae cluster in region (a) is visually better resolved by \proposed~than Tikhonov. Moreover, in \proposed~reconstructions, the top half of algae \cfive~in region (b) looks sharp in slice $z=$\,\SI{25}{\micro\meter} and the bottom half appears clear in slice $z=$\,\SI{35}{\micro\meter}. This inter-slice information implies that algae \cfive~penetrate through $z=$\,\SI{25}{\micro\meter} and $z=$\,\SI{35}{\micro\meter}. However, the whole structure of algae \cfive~is present in both slices of Tikhonov reconstructions, which fails in illustrating the axial position. Note that \proposed~also better eliminates artifacts pointed out by arrows (d) and (e). To further analyze the performance of the priors, we bring BM4D, the 3D version of the well-known denoiser BM3D, into comparison. In zoom-in region (a) of Figure \ref{fig:Algae}, Tikhonov reconstruction contains grid-shape artifacts. BM4D generates small blocks due to its block-matching mechanism. $\DnCNNast$ provides a more real and sharper result than the other two. In region (b), Tikhonov reconstruction is of satisfactory visual quality but the shadow of algae \cfive~and \csix~in the background interferes with the actual content in this slice. BM4D erases the shadow in the background but it again generates blocky artifacts which makes its reconstruction not as real as $\DnCNNast$ result.

Finally, we present one slice of the full $1024\times1024\times25$ reconstruction by \proposed~under $\DnCNNast$ in Figure \ref{fig:residual}. For comparison, we run GM-RED (full) under $\DnCNNast$ but only for the dotted region because of the high computational cost of the full batch reconstruction. The result is juxtaposed with our \proposed~result. These two algorithms are run with the same $\tau$ value until convergence. Visually, they look almost identical and we present the absolute value of the residual between the two for reference. The residual is negligible compared to the numerical scale of the two results. Quantitatively, if we assume the result of the full batch algorithm to be the ``ground truth", the SNR of \proposed~is 47.03 dB. This substantiates that \proposed~sufficiently matches the full batch algorithms in terms of the final reconstruction quality. Specifically, the average per-iteration running time of \proposed~for reconstructing the dotted region is 22 seconds, while that of GM-RED is 192 seconds, which corresponds to a $9\times$ speed-up. SIMBA also requires less memory at every iteration by processing only about one ninths of full measurements. The reduced running time and memory usage in processing such an intensive amount of data highlighted the efficiency improvement of \proposed~compared to the traditional batch GM-RED. 

We would also like to note that the memory considerations in image reconstruction must take into account the size of all the variables related to the image volume $\xbm$, the measured data $\{\ybm_i\}$, and the measurement operators $\{\Abm_i\}$. 
The goal of this paper is to address the problems where the bottleneck is in the storage and processing of the measurements and measurement operators. 
Table~\ref{Tab:Memo} records the total memory (Gb) used by SIMBA and GM-RED (full) in each iteration. \textcolor{red}{Note that our implementation stores each $\Abm_i$ as two separate matrices for phase and absorption. Additionally, each matrix is stored in the Fourier space to reduce computational complexity of computing convolutions. This results in the storage of complex valued arrays for each, consisting of pairs of double precision floats for every element.} While GM-RED (full) requires 75.45 Gb of memory due to its processing of all measurements in every iteration, SIMBA uses only 11.26 Gb, about one-seventh of the full volume, which makes the algorithm particularly well suited to tomographic applications where one needs to process a very large number of views.

\section{Conclusion}
\label{Sec:Conclusion}
We proposed an extension of RED for solving imaging inverse problems in optical tomography. Our method is scalable to large measurements and uses a deep denoising prior to improve the final estimate. We proved the fixed-point convergence of the method without assuming an explicit objective function, which complements the current theoretical analysis of RED for large-scale image reconstruction. We validated the method on both simulated and experimental IDT data. Especially, the 3D reconstruction of a large algae sample fully elucidates the benefits of our method in data-intensive imaging problems. Future work includes the application of \proposed~in other advanced IDT modalities with coded illumination patterns~\cite{Matlock.etal2019} and accelerated data acquisition~\cite{Li.etal2019}.

\section*{Acknowledgements}

This work was partially supported by grants NSF CCF-1813910.

\bibliographystyle{IEEEtran}


\newpage

\appendix

\section{Background Material}

The results in this section are well-known in the optimization literature and can be found in different forms in standard textbooks~\cite{Rockafellar.Wets1998, Boyd.Vandenberghe2004, Nesterov2004, Bauschke.Combettes2017}. For completeness, we summarize the key results useful for our analysis. 

\medskip
\begin{definition}
An operator $\Tsf$ is Lipschitz continuous with constant $\lambda > 0$ if
$$\|\Tsf\xbm - \Tsf\ybm\| \leq \lambda\|\xbm-\ybm\|,\quad \xbm, \ybm \in \R^n.$$
When $\lambda = 1$, we say that $\Tsf$ is nonexpansive. 
\end{definition}

\begin{definition}
$\Tsf$ is cocoercive with constant $\beta > 0$ if
$$(\Tsf\xbm-\Tsf\ybm)^\Tsf(\xbm-\ybm) \geq \beta\|\Tsf\xbm-\Tsf\ybm\|^2, \quad \xbm, \ybm \in \R^n.$$
When $\beta = 1$, we say that $\Tsf$ is firmly nonexpansive.
\end{definition}

\medskip\noindent
The following results are derived from the definition above.

\medskip
\begin{proposition}
\label{Prop:BlockConvNonexp}
Let $\Tsf_i: \R^n \rightarrow \R^n$ for $i \in I$ be a set of nonexpansive operators. Then, their convex combination
$$\Tsf \defn \sum_{i \in I} \theta_i \Tsf_i, \quad\text{with}\quad \theta_i > 0 \text{ and } \sum_{i \in I} \theta_i = 1,$$
is nonexpansive.
\end{proposition}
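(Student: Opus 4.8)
The plan is to establish the inequality $\|\Tsf\xbm - \Tsf\ybm\| \leq \|\xbm - \ybm\|$ directly from the definition of $\Tsf$ as a convex combination, relying only on the triangle inequality (equivalently, convexity of the norm) together with the nonexpansiveness of each $\Tsf_i$. First I would fix arbitrary points $\xbm, \ybm \in \R^n$ and write the difference $\Tsf\xbm - \Tsf\ybm = \sum_{i \in I} \theta_i(\Tsf_i\xbm - \Tsf_i\ybm)$, which follows immediately from the linearity of the summation in the definition of $\Tsf$.

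Next I would take norms on both sides and apply the triangle inequality. Because the coefficients $\theta_i$ are nonnegative, they pass through the norm as scalars, giving
$$\|\Tsf\xbm - \Tsf\ybm\| \leq \sum_{i \in I} \theta_i \|\Tsf_i\xbm - \Tsf_i\ybm\|.$$
I would then invoke the nonexpansiveness hypothesis on each operator, $\|\Tsf_i\xbm - \Tsf_i\ybm\| \leq \|\xbm - \ybm\|$, and substitute this bound termwise. Finally, factoring out $\|\xbm - \ybm\|$ and using the normalization $\sum_{i \in I}\theta_i = 1$ collapses the right-hand side to exactly $\|\xbm - \ybm\|$, which is the desired conclusion.

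Since $\xbm$ and $\ybm$ were arbitrary, this proves $\Tsf$ is nonexpansive. Honestly, this proposition is elementary and there is no genuine obstacle: the entire argument is a single application of the triangle inequality followed by termwise substitution. The only point requiring any care is ensuring the coefficients are nonnegative so the triangle inequality applies to a true convex combination, but this is guaranteed by the hypothesis $\theta_i > 0$. The result is a standard fact about convex combinations of nonexpansive maps and is used in the convergence analysis to conclude that the averaged update operator built from the component operators inherits the nonexpansiveness needed for the Krasnosel'skii--Mann machinery.
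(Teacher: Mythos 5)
Your proof is correct and follows exactly the same route as the paper's: expand $\Tsf\xbm - \Tsf\ybm$ as the convex combination of the differences, apply the triangle inequality, bound each term by nonexpansiveness of $\Tsf_i$, and use $\sum_{i\in I}\theta_i = 1$ to conclude. No gaps and no meaningful differences from the paper's argument.
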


\begin{proof}
By using the triangular inequality and the definition of nonexpansiveness, we obtain
\begin{align*}
&\|\Tsf\xbm-\Tsf\ybm\| \leq \sum_{i \in I} \theta_i \|\Tsf_i\xbm-\Tsf_i\ybm\|  \\
&\leq \left(\sum_{i \in I}\theta_i\right) \|\xbm-\ybm\| = \|\xbm-\ybm\|,
\end{align*}
for all $\xbm, \ybm \in \R^n$.
\end{proof}

\begin{proposition}
\label{Prop:NonexpCocoerOp}
Consider $\Rsf = \Isf - \Tsf$ where $\Tsf: \R^n \rightarrow \R^n$.
$$\Tsf \text{ is nonexpansive } \,\Leftrightarrow\, \Rsf \text{ is $(1/2)$-cocoercive.}$$
\end{proposition}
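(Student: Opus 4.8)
The plan is to reduce the claimed equivalence to a single algebraic identity by passing to difference vectors. First I would fix arbitrary points $\xbm, \ybm \in \R^n$ and introduce the shorthand $\ubm \defn \xbm - \ybm$ and $\vbm \defn \Tsf\xbm - \Tsf\ybm$. Since $\Rsf = \Isf - \Tsf$, the difference of $\Rsf$ at these two points is $\Rsf\xbm - \Rsf\ybm = \ubm - \vbm$. With this notation, nonexpansiveness of $\Tsf$ evaluated at the pair $(\xbm,\ybm)$ reads $\|\vbm\|^2 \leq \|\ubm\|^2$, while $(1/2)$-cocoercivity of $\Rsf$ at the same pair reads $(\ubm - \vbm)^\Tsf \ubm \geq \tfrac{1}{2}\|\ubm - \vbm\|^2$. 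The whole proof amounts to showing that these two per-pair inequalities are identical.

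The core step is to expand the cocoercivity inequality and observe that the sign-indefinite cross term cancels exactly. Expanding its left-hand side gives $(\ubm - \vbm)^\Tsf \ubm = \|\ubm\|^2 - \vbm^\Tsf \ubm$, and expanding its right-hand side gives $\tfrac{1}{2}\|\ubm - \vbm\|^2 = \tfrac{1}{2}\|\ubm\|^2 - \vbm^\Tsf \ubm + \tfrac{1}{2}\|\vbm\|^2$. Subtracting, the $\vbm^\Tsf\ubm$ terms drop out and the inequality collapses to $\tfrac{1}{2}\|\ubm\|^2 \geq \tfrac{1}{2}\|\vbm\|^2$, which is precisely the nonexpansiveness inequality $\|\vbm\|^2 \leq \|\ubm\|^2$. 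I would present this as a chain of equivalences so that the cancellation is manifestly reversible.

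Because every manipulation above is an equality (hence each step is an iff), the two conditions coincide at each fixed pair $(\xbm,\ybm)$. Finally I would note that both ``$\Tsf$ nonexpansive'' and ``$\Rsf$ is $(1/2)$-cocoercive'' are by definition the requirement that the respective inequality hold for \emph{all} $\xbm, \ybm \in \R^n$; since the per-pair statements are equivalent, quantifying over all pairs yields the desired two-sided equivalence. There is no genuine obstacle here: the only point requiring care is to write the derivation as a string of equalities rather than proving a single implication, so that both directions come for free from the exact cancellation of the inner-product term, rather than needing a separate converse argument.
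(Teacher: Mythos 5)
Your proposal is correct and follows essentially the same route as the paper's proof: the same expansion of the cocoercivity inequality $(\ubm-\vbm)^\Tsf\ubm \geq \tfrac{1}{2}\|\ubm-\vbm\|^2$ via the polarization identity, with the cross term $\vbm^\Tsf\ubm$ cancelling exactly to leave $\|\vbm\|^2 \leq \|\ubm\|^2$. The only difference is presentational: you write the argument as a manifestly reversible chain of equivalences, while the paper proves the direction from cocoercivity to nonexpansiveness and remarks that the converse follows by running the same logic in reverse.
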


\begin{proof}
First suppose that $\Rsf$ is $1/2$ cocoercive. Let $\hbm \defn \xbm - \ybm$ for any $\xbm, \ybm \in \R^n$. We then have
$$\frac{1}{2}\|\Rsf\xbm-\Rsf\ybm\|^2 \leq (\Rsf\xbm-\Rsf\ybm)^\Tsf\hbm = \|\hbm\|^2 - (\Tsf\xbm-\Tsf\ybm)^\Tsf\hbm.$$
We also have that
$$\frac{1}{2}\|\Rsf\xbm-\Rsf\ybm\|^2 = \frac{1}{2}\|\hbm\|^2 - (\Tsf\xbm-\Tsf\ybm)^\Tsf\hbm + \frac{1}{2}\|\Tsf\xbm-\Tsf\ybm\|^2.$$
By combining these two and simplifying the expression
$$\|\Tsf\xbm-\Tsf\ybm\| \leq \|\hbm\|.$$
The converse can be proved by following this logic in reverse.
\end{proof}

\begin{definition}
For a constant $\alpha \in (0, 1)$, we say that $\Tsf$ is $\alpha$-averaged, if there exists a nonexpansive operator $\Nsf$ such that $\Tsf = (1-\alpha)\Isf + \alpha \Nsf$.
\end{definition}

The following characterization is often convenient.
\begin{proposition}
\label{Prop:BlockAveragedEquiv}
For a nonexpansive operator $\Tsf$, a constant $\alpha \in (0, 1)$, and the operator ${\Rsf \defn \Isf-\Tsf}$, the following are equivalent
\begin{enumerate}[label=(\alph*), leftmargin=*]
\item $\Tsf$ is $\alpha$-averaged
\item $(1-1/\alpha)\Isf + (1/\alpha)\Tsf$ is nonexpansive
\item $\|\Tsf\xbm - \Tsf\ybm\|^2 \leq \|\xbm-\ybm\|^2 - \left(\frac{1-\alpha}{\alpha}\right)\|\Rsf\xbm-\Rsf\ybm\|^2,\, \xbm, \ybm \in \R^n$.
\end{enumerate}
\end{proposition}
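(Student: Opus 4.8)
The plan is to prove the two equivalences $(a)\Leftrightarrow(b)$ and $(b)\Leftrightarrow(c)$, which together give the full cyclic equivalence. The observation that drives the whole argument is that the operator in $(b)$ can be rewritten purely in terms of $\Rsf$: since $\Tsf = \Isf - \Rsf$, we have $(1-1/\alpha)\Isf + (1/\alpha)\Tsf = \Isf - (1/\alpha)\Rsf$. This single rewriting lets me expand both the nonexpansiveness condition $(b)$ and the inequality $(c)$ in the common quantities $\hbm \defn \xbm - \ybm$ and $\rbm \defn \Rsf\xbm - \Rsf\ybm$, after which both collapse to the same scalar inequality.

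First I would dispose of $(a)\Leftrightarrow(b)$, which is essentially definitional. By the definition of $\alpha$-averagedness, $\Tsf$ is $\alpha$-averaged precisely when there exists a nonexpansive $\Nsf$ with $\Tsf = (1-\alpha)\Isf + \alpha\Nsf$. Because $\alpha \in (0,1)$ is nonzero, this relation solves uniquely for $\Nsf$, giving $\Nsf = (1-1/\alpha)\Isf + (1/\alpha)\Tsf$. Hence an admissible nonexpansive $\Nsf$ exists if and only if this particular operator is itself nonexpansive, which is exactly statement $(b)$.

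Next I would prove $(b)\Leftrightarrow(c)$ by elementary inner-product algebra. Fixing arbitrary $\xbm, \ybm \in \R^n$ and setting $\hbm \defn \xbm - \ybm$, $\rbm \defn \Rsf\xbm - \Rsf\ybm$, the rewriting above gives that the image difference of the operator in $(b)$ equals $\hbm - (1/\alpha)\rbm$; squaring and imposing $\|\Nsf\xbm-\Nsf\ybm\|^2 \leq \|\hbm\|^2$ shows that $(b)$ is equivalent to
\[
\tfrac{1}{\alpha}\|\rbm\|^2 \leq 2\,\hbm^\Tsf\rbm .
\]
On the other hand, since $\Tsf\xbm - \Tsf\ybm = \hbm - \rbm$, expanding the left-hand side of $(c)$ and rearranging shows that $(c)$ is equivalent to $\left(1 + \tfrac{1-\alpha}{\alpha}\right)\|\rbm\|^2 \leq 2\,\hbm^\Tsf\rbm$, and the coefficient collapses to $1/\alpha$. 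Both conditions thus reduce to the identical scalar inequality, and since $\xbm, \ybm$ were arbitrary, $(b)$ and $(c)$ are equivalent.

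I expect no genuine obstacle here: the argument is entirely elementary. The only points requiring care are the coefficient bookkeeping --- in particular verifying that $1 + \tfrac{1-\alpha}{\alpha}$ simplifies to $\tfrac{1}{\alpha}$ --- and committing to the substitution $\Nsf = \Isf - (1/\alpha)\Rsf$ at the outset so that $(b)$ and $(c)$ are both expressed in the same two quantities $\|\rbm\|^2$ and $\hbm^\Tsf\rbm$. With that substitution in place the equivalence is immediate.
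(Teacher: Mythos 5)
Your proof is correct and complete. Note, however, that the paper itself gives no argument for this proposition: its ``proof'' is a one-line pointer to Proposition~4.35 of Bauschke and Combettes, so your write-up supplies what the paper omits, namely a self-contained elementary derivation. The two routes also differ in substance: the cited textbook result is standardly proved via the convexity identity $\|(1-\alpha)\ubm+\alpha\vbm\|^2=(1-\alpha)\|\ubm\|^2+\alpha\|\vbm\|^2-\alpha(1-\alpha)\|\ubm-\vbm\|^2$ applied to the decomposition $\Tsf=(1-\alpha)\Isf+\alpha\Nsf$, whereas you first solve the averagedness relation uniquely for $\Nsf=(1-1/\alpha)\Isf+(1/\alpha)\Tsf=\Isf-(1/\alpha)\Rsf$ (which settles (a)$\Leftrightarrow$(b) definitionally, since $\alpha\neq 0$), and then reduce both (b) and (c), by direct expansion of squared norms, to the single scalar inequality $\tfrac{1}{\alpha}\|\Rsf\xbm-\Rsf\ybm\|_2^2\le 2\,(\Rsf\xbm-\Rsf\ybm)^\Tsf(\xbm-\ybm)$; your coefficient bookkeeping, including $1+\tfrac{1-\alpha}{\alpha}=\tfrac{1}{\alpha}$, checks out. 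What your approach buys: it is verifiable line-by-line in the paper's own finite-dimensional notation, and it makes visible that the standing hypothesis that $\Tsf$ be nonexpansive is never actually used---it is implied by any of (a)--(c). What the paper's citation buys: brevity and the greater (Hilbert-space) generality of the textbook statement.
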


\begin{proof}
See Proposition~4.35 in~\cite{Bauschke.Combettes2017}.
\end{proof}

\begin{proposition}
\label{Prop:NonexpEquiv}
Consider $\Tsf: \R^n \rightarrow \R^n$ and $\beta > 0$. Then, the following are equivalent
\begin{enumerate}[label=(\alph*), leftmargin=*]
\item $\Tsf$ is $\beta$-cocoercive
\item $\beta\Tsf$ is firmly nonexpansive
\item $\Isf-\beta\Tsf$ is firmly nonexpansive.
\item $\beta\Tsf$ is $(1/2)$-averaged.
\item $\Isf-2\beta\Tsf$ is nonexpansive.
\end{enumerate}
\end{proposition}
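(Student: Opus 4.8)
The plan is to prove the five statements mutually equivalent by stitching together a few pairwise equivalences that form a connected chain through the list, relying almost entirely on the definitions of cocoercivity, firm nonexpansiveness, and averagedness, together with Proposition~\ref{Prop:NonexpCocoerOp}.

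First I would dispatch the equivalence (a)~$\Leftrightarrow$~(b). Writing out the $\beta$-cocoercivity inequality $(\Tsf\xbm - \Tsf\ybm)^\Tsf(\xbm-\ybm) \geq \beta\|\Tsf\xbm-\Tsf\ybm\|^2$ and multiplying through by the scalar $\beta > 0$ turns it into exactly the firm-nonexpansiveness inequality for $\beta\Tsf$, so the two conditions are literally the same statement after rescaling. Next, for (b)~$\Leftrightarrow$~(c) I would set $\Rsf \defn \Isf - \beta\Tsf$ and expand both $\|\Rsf\xbm - \Rsf\ybm\|^2$ and $(\Rsf\xbm-\Rsf\ybm)^\Tsf(\xbm-\ybm)$ in terms of $\beta\Tsf$; after cancellation the firm-nonexpansiveness inequality for $\Isf-\beta\Tsf$ reduces identically to that for $\beta\Tsf$, establishing the equivalence in both directions at once. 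This reflects the self-complementary nature of firmly nonexpansive operators under $\Isf - (\cdot)$.

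The remaining two conditions are attached through condition (e). For (a)~$\Leftrightarrow$~(e) I would apply Proposition~\ref{Prop:NonexpCocoerOp} with its operator taken to be $\Isf - 2\beta\Tsf$, so that the associated residual is $\Isf - (\Isf - 2\beta\Tsf) = 2\beta\Tsf$; the proposition then gives that $\Isf - 2\beta\Tsf$ is nonexpansive iff $2\beta\Tsf$ is $(1/2)$-cocoercive, and unwinding the $(1/2)$-cocoercivity of $2\beta\Tsf$ by dividing the resulting inequality through by $2\beta$ returns precisely the $\beta$-cocoercivity of $\Tsf$. Finally, (d)~$\Leftrightarrow$~(e) follows directly from the definition of $(1/2)$-averagedness: $\beta\Tsf$ is $(1/2)$-averaged iff $\beta\Tsf = \tfrac{1}{2}\Isf + \tfrac{1}{2}\Nsf$ for some nonexpansive $\Nsf$, i.e. iff $\Nsf = 2\beta\Tsf - \Isf$ is nonexpansive, which---since nonexpansiveness is preserved under negation and $\Isf - 2\beta\Tsf$ is the pointwise negative of $2\beta\Tsf - \Isf$---is equivalent to (e). Chaining these yields (c)~$\Leftrightarrow$~(b)~$\Leftrightarrow$~(a)~$\Leftrightarrow$~(e)~$\Leftrightarrow$~(d), so all five are equivalent.

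I expect the only real pitfall to be clerical: correctly bookkeeping the factors of $\beta$ and $2$ when translating between the cocoercivity of $\Tsf$, the $(1/2)$-cocoercivity of $2\beta\Tsf$, and the various nonexpansiveness statements, and matching the roles in Proposition~\ref{Prop:NonexpCocoerOp}, whose operator is our $\Isf - 2\beta\Tsf$ rather than our $\Tsf$. There is no conceptual obstacle---the statement simply repackages standard monotone-operator facts---so the work reduces to the algebraic verification of each inequality, with care that every direction of each equivalence is covered.
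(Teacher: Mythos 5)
Your proof is correct, and it connects the five statements by a genuinely different chain than the paper does. You agree with the paper on (a)~$\Leftrightarrow$~(b) (rescaling the cocoercivity inequality by $\beta$) and on (b)~$\Leftrightarrow$~(c) (expanding the inner-product identities for $\Isf-\beta\Tsf$; the paper proves one implication and appeals to the symmetry $\Psf \leftrightarrow \Isf - \Psf$, while you observe that the algebra reduces the two inequalities to one another bidirectionally---same substance, slightly cleaner in your version). The divergence is in how (d) and (e) are attached. The paper routes both through (b): it proves (b)~$\Leftrightarrow$~(d) by manipulating the firm-nonexpansiveness inequality into the characterization of $(1/2)$-averaged operators in Proposition~\ref{Prop:BlockAveragedEquiv}(c), and it proves (e) equivalent to (b)/(d) by writing $\beta\Tsf = \tfrac{1}{2}(\Isf + (-\Nsf))$ with $\Nsf = \Isf - 2\beta\Tsf$ and invoking Proposition~\ref{Prop:BlockAveragedEquiv}(b). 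You instead bridge (a)~$\Leftrightarrow$~(e) by applying Proposition~\ref{Prop:NonexpCocoerOp} to the operator $\Isf - 2\beta\Tsf$, whose residual is $2\beta\Tsf$, and then obtain (d)~$\Leftrightarrow$~(e) from the bare definition of $(1/2)$-averagedness together with invariance of nonexpansiveness under negation; your factor bookkeeping there is right ($(1/2)$-cocoercivity of $2\beta\Tsf$, divided through by $2\beta$, is exactly $\beta$-cocoercivity of $\Tsf$). Your route buys independence from Proposition~\ref{Prop:BlockAveragedEquiv} entirely---everything rests on Proposition~\ref{Prop:NonexpCocoerOp} and definitions---and your (d)~$\Leftrightarrow$~(e) step is more direct than the paper's; the paper's route, by funneling everything through firm nonexpansiveness of $\beta\Tsf$, exercises the averagedness calculus that underlies the Krasnosel'skii--Mann fixed-point machinery used later in the convergence analysis. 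Both chains touch all five statements, so the proofs are equally complete.
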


\begin{proof}
For any $\xbm, \ybm \in \R^n$, let $\hbm \defn \xbm-\ybm$. The equivalence between (a) and (b) is readily observed by defining $\Psf \defn \beta\Tsf$ and noting that
\begin{align}
&(\Psf\xbm - \Psf\ybm)^\Tsf\hbm = \beta(\Tsf\xbm - \Tsf\ybm)^\Tsf\hbm \nonumber\\
&\quad\text{and}\quad \|\Psf\xbm-\Psf\ybm\|^2 = \beta^2 \|\Tsf\xbm-\Tsf\ybm\|.
\end{align}

\medskip\noindent
Define $\Rsf \defn \Isf - \Psf$ and suppose (b) is true, then
\begin{align*}
&(\Rsf\xbm-\Rsf\ybm)^\Tsf\hbm
= \|\hbm\|^2 - (\Psf\xbm-\Psf\ybm)^\Tsf\hbm \\
&= \|\Rsf\xbm-\Rsf\ybm\|^2 + (\Psf\xbm-\Psf\ybm)^\Tsf\hbm - \|\Psf\xbm-\Psf\ybm\|^2 \\
&\geq \|\Rsf\xbm-\Rsf\ybm\|^2.
\end{align*}
By repeating the same argument for $\Psf = \Isf - \Rsf$, we establish the full equivalence between (b) and (c).

\medskip\noindent
The equivalence of (b) and (d) can be seen by noting that
\begin{align*}
&2\|\Psf\xbm-\Psf\ybm\|^2 \leq 2(\Psf\xbm-\Psf\ybm)^\Tsf\hbm \\
&\Leftrightarrow\quad\|\Psf\xbm-\Psf\ybm\|^2 \leq 2(\Psf\xbm-\Psf\ybm)^\Tsf\hbm - \|\Psf\xbm-\Psf\ybm\|^2 \\
&= \|\hbm\|^2-(\|\hbm\|^2 - 2(\Psf\xbm-\Psf\ybm)^\Tsf\hbm + \|\Psf\xbm-\Psf\ybm\|^2)\nonumber\\
&= \|\hbm\|^2 - \|\Rsf\xbm-\Rsf\ybm\|^2.
\end{align*}

\medskip\noindent
To show the equivalence with (e), first suppose that ${\Nsf \defn \Isf - 2 \Psf}$ is  nonexpansive, then ${\Psf = \frac{1}{2}(\Isf + (-\Nsf))}$ is $1/2$-averaged, which means that it is firmly nonexpansive. On the other hand, if $\Psf$ is firmly nonexpansive, then it is $1/2$-averaged, which means that from Proposition~\ref{Prop:BlockAveragedEquiv}(b) we have that $(1-2)\Isf + 2\Psf = 2\Psf - \Isf = -\Nsf$ is nonexpansive. This directly means that $\Nsf$ is nonexpansive.
\end{proof}

\section{Proof of Theorem~\ref{Thm:ConvThm1}}

\label{Sec:Proof}
We consider the following operators
$$\Gsf \defn \nabla g + \Hsf \quad\text{and}\quad \Gsfhat \defn \nablahat g + \Hsf \quad\text{with}\quad \Hsf \defn \tau(\Isf - \Dsf_\sigma),$$
where $\Gsfhat$ is the minibatch approximation of $\Gsf$. The direct application of Assumption~\ref{As:DataFitConvexity} implies that for any $\xbm, \ybm \in \R^n$
\begin{subequations}
\label{Eq:Expectations}
\begin{align}
&\E[\Gsfhat (\xbm)] = \E[\nablahat g(\xbm)] + \Hsf(\xbm) = \Gsf(\xbm)\\
&\E[\|\Gsf(\xbm) - \Gsfhat (\xbm)\|_2^2] = \E[\|\nabla g(\xbm) - \nablahat g(\xbm)\|_2^2] \leq \frac{\nu^2}{B}
\end{align}
\end{subequations}

Now we prove Theorem~\ref{Thm:ConvThm1} in several steps.
\begin{enumerate}[label=(\alph*), leftmargin=*]
\item Since $\nabla g$ is $L$-Lipschitz continuous, we know that it is $(1/L)$-cocoercive (see Theorem~2.1.5 in Section~2.1 of~\cite{Nesterov2004}). Then from Proposition~\ref{Prop:NonexpEquiv}, we know that the operator ${(\Isf-(2/L)\nabla g)}$ is nonexpansive.
\item From the definition of $\Hsf$ and the fact that $\Dsf$ is nonexpansive, we know that ${(\Isf-(1/\tau)\Hsf) = \Dsf}$ is nonexpansive.

\item From Proposition~\ref{Prop:BlockConvNonexp}, we know that a convex combination of nonexpansive operators is also nonexpansive, hence
\begin{align*}
&\Isf - \frac{2}{L+2\tau}\Gsf = \left(\frac{2}{L+2\tau}\cdot \frac{L}{2}\right)\left[\Isf-\frac{2}{L}\nabla g\right] \\
&+ \left(\frac{2}{L+2\tau}\cdot \frac{2\tau}{2}\right)\left[\Isf - \frac{1}{\tau}\Hsf\right]\nonumber,
\end{align*}
is nonexpansive. Then from Proposition~\ref{Prop:NonexpEquiv}, we know that $\Gsf$ is $1/(L+2\tau)$-cocoercive.

\item Consider any $\xbmast \in \zer(\Gsf)$ and $\xbm \in \R^n$. We then have
\begin{align}
\label{Eq:SingleIter}
\nonumber&\|\xbm - \xbmast - \gamma \Gsf\xbm\|^2\\
\nonumber&= \|\xbm-\xbmast\|^2 - 2\gamma (\Gsf\xbm-\Gsf\xbmast)^\Tsf(\xbm-\xbmast) + \gamma^2\|\Gsf\xbm\|^2 \\
\nonumber&\leq \|\xbm-\xbmast\|^2 - \frac{2\gamma-(L+2\tau)\gamma^2}{L+2\tau}\|\Gsf\xbm\|^2 \\
&\leq \|\xbm-\xbmast\|^2 - \frac{\gamma}{L+2\tau}\|\Gsf \xbm\|^2,
\end{align}
where we used $\Gsf\xbmast = \zerobm$, the cocoercivity of $\Gsf$, and the fact that $0 < \gamma \leq 1/(L+2\tau)$. 

\item For a single iteration of SIMBA $\xbm^+ = \xbm - \gamma\Gsfhat\xbm$, we have
\begin{align*}
&\|\xbm^+-\xbmast\|^2 = \|\xbm - \xbmast - \gamma \Gsfhat\xbm\|^2\\
&=\|\xbm - \xbmast - \gamma \Gsf\xbm + \gamma(\Gsf\xbm-\Gsfhat\xbm)\|^2\\
&= \|\xbm - \xbmast - \gamma \Gsf\xbm\|^2 + \gamma^2 \|\Gsf\xbm-\Gsfhat\xbm\|^2\\ 
&+ 2\gamma(\Gsf\xbm-\Gsfhat\xbm)^\Tsf(\xbm - \xbmast - \gamma \Gsf\xbm).
\end{align*}
By taking the conditional expectation with respect to the previous iterate $\xbm$, using~\eqref{Eq:Expectations}, and applying the bound~\eqref{Eq:SingleIter}, we obtain
\begin{align}
\label{Eq:Expectation}
\nonumber&\E\left[\|\xbm^+-\xbmast\|^2 \mid \xbm\right] \leq \|\xbm - \xbmast - \gamma \Gsf\xbm\|^2 + \frac{\gamma^2\nu^2}{B} \\
&\leq \|\xbm-\xbmast\|^2 - \frac{\gamma}{L+2\tau}\|\Gsf \xbm\|^2 + \frac{\gamma^2\nu^2}{B}.
\end{align}

\item By simply rearranging the terms, we obtain
\begin{align*}
&\frac{\gamma}{L + 2\tau} \|\Gsf\xbm\|^2 \leq   \E\left[\|\xbm-\xbmast\|^2 - \|\xbm^+ - \xbmast\|^2 | \xbm \right] + \frac{\gamma^2\nu^2}{B}
\end{align*}
\item Hence, by averaging over $t \geq 1$ iterations and taking the total expectation, we obtain our main result
\begin{align*}
\E\left[\frac{1}{t}\sum_{k = 1}^{t} \|\Gsf\xbm^{k-1}\|^2\right] 
\leq \frac{(L+2\tau)}{\gamma} \left[\frac{\|\xbm^0-\xbmast\|^2}{t} + \frac{\gamma^2\nu^2}{B}\right].
\end{align*}
\end{enumerate}

\end{document}